\documentclass[onecolumn,10pt,aps,prd,preprintnumbers,showpacs,superscriptaddress,nofootinbib,amsmath,amssymb,floats,floatfix,showkeys,notitlepage,longbibliography]{revtex4-2}
\usepackage{comment}
\usepackage{lipsum}
\usepackage{graphicx}
\usepackage{subfigure}
\usepackage{palatino}
\usepackage{mathrsfs}
\newtheorem{theorem}{Theorem}[section]
\newtheorem{lemma}[theorem]{Lemma}

\numberwithin{equation}{section}
\newtheorem{definition}{Definition}[section] 

\newtheorem{proposition}{Proposition}[section]
\usepackage{sans}
\usepackage{hyperref}
\hypersetup{colorlinks=true,linkcolor=blue,urlcolor=blue,citecolor=blue}
\usepackage[toc,page]{appendix}
\usepackage[normalem]{ulem}
\usepackage{adjustbox}
\usepackage{latexsym}
\usepackage{amsmath}
\usepackage{amssymb}
\usepackage{amsfonts}
\usepackage{dcolumn}
\usepackage{bm}
\usepackage{tikz}
\usepackage{bigints}
\usepackage{array,tabularx,multirow,booktabs}
\usepackage[tracking=true]{microtype}
\usepackage{soul} 
\SetTracking{}{500}
\SetTracking{encoding={*}, shape=sc}{40}
\UseRawInputEncoding 
\allowdisplaybreaks
\usepackage{microtype}

\renewcommand{\Re}{\ensuremath{\mathrm{Re}}}
\renewcommand{\Im}{\ensuremath{\mathrm{Im}}}

\begin{document} \sloppy
\title {Eikonal Quality-Factor Parameterization of Model-Conditional Static Black-Hole Thermodynamics}

\author{Nikko John Leo S. Lobos}
\email{nikko\_john\_s\_lobos@dlsu.edu.ph}
\affiliation{Department of Physics, De La Salle University, 2401 Taft Ave, Malate, Manila, 1004 Metro Manila, Philippines}

\author{Emmanuel T. Rodulfo}
\email{emmanuel.rodulfo@dlsu.edu.ph}
\affiliation{Department of Physics, De La Salle University, 2401 Taft Ave, Malate, Manila, 1004 Metro Manila, Philippines}

\begin{abstract}
We study how one complex quasinormal frequency can parameterize the geometry
and model-conditional thermodynamics of a prescribed family of static,
spherically symmetric black holes. For a minimally coupled massless test
scalar in the eikonal limit, the ratio
$\chi=\omega_R\tau=\omega_R/\omega_I=2Q$ removes the overall mass scale. If
$\chi$ is injective in one dimensionless parameter, that parameter and the
geometric scale can be reconstructed within the chosen family. The normalized
metric then fixes the horizon and Hawking temperature, whereas Wald entropy
requires the gravitational action. For the RN-form family, $\chi$ determines
$u=q^2$ but not the sign of charge. The constant-coupling static MOG metric is
exactly RN under $\mathcal M=(1+\alpha)m_{\rm MOG}$ and
$u=\alpha/(1+\alpha)$; consequently, the complete minimally coupled scalar
spectrum is identical on mapped backgrounds for every multipole and overtone.
The two assignments have equal temperature but different action-dependent
entropy. Chebyshev and WKB--Pad\'e calculations quantify the finite-multipole
error, with a maximum eikonal ratio error of $2.1\times10^{-3}$ on the tested
$l=2$ RN grid. The construction is therefore a model-dependent inverse, not a
theory selector. Applying it to gravitational-wave observations requires the
appropriate tensor, vector, and scalar perturbation sectors beyond the
test-field eikonal approximation.
\end{abstract}

\pacs{04.70.Bw, 04.70.Dy, 04.50.Kd, 98.62.Sb}
\keywords{Ringdown, Quasi-normal Modes, Hawking Radiation, Black Hole Temperature, Thermodynamics}

\maketitle
\section{Introduction}
\label{sec:introduction}

A black-hole remnant approaches equilibrium through quasinormal oscillations. Each quasinormal mode has a complex frequency fixed by the background geometry, the perturbing field, the angular and overtone labels, and the boundary conditions imposed at the horizon and at spatial infinity \cite{Berti:2009kk}. In gravitational-wave observations, the inferred ringdown signal may contain several angular modes, overtones, nonlinear merger contributions, and detector noise. The recovered frequencies can consequently depend on the adopted ringdown start time and mode content \cite{Giesler:2019uxc}. A frequency formula alone therefore does not define an observational inference problem.

The present work considers a narrower and controlled question. We study a neutral, minimally coupled massless test scalar on prescribed static and spherically symmetric black-hole backgrounds. The scalar field is used as a geometric probe and is not identified with a tensor perturbation measured by a gravitational-wave detector. This distinction is essential in modified gravity because the gravitational perturbation equations can contain additional tensor, vector, or scalar degrees of freedom. Their effective potentials and couplings need not coincide with the minimally coupled test-scalar equation.

For static spherical geometries, the leading large-multipole quasinormal frequency is determined by the unstable circular null orbit. Its real part is proportional to the orbital frequency, while its imaginary part is controlled by the classical Lyapunov exponent of the orbit \cite{Cardoso:2008bp}. This photon-sphere correspondence is an asymptotic expansion in the inverse angular momentum and admits systematic subleading corrections \cite{Dolan:2009nk}. It is reliable for minimally coupled test fields under the usual single-barrier assumptions, but it is not a universal relation for gravitational perturbations in modified theories \cite{Konoplya:2017wot}. Any use of the correspondence at the low multipoles relevant to ringdown must therefore be supported by an independent finite-multipole calculation.

Quasinormal spectra have also been used to constrain parametrized metrics and perturbation potentials. Test-scalar spectra can exhibit approximate degeneracies among distinct spherically symmetric metrics when the mass and deformation parameters are allowed to vary \cite{VolkelKokkotas2019}. Parametrized ringdown frameworks instead describe small changes in decoupled or coupled perturbation equations and relate them to shifts in the quasinormal frequencies \cite{CardosoKimuraMaselliBertiMacedoMcManus2019,McManusBertiMacedoKimuraMaselliCardoso2019}. More general constructions attempt to recover effective potentials and interfield couplings from several modes or overtones \cite{VolkelFranchiniBarausse2022,FranchiniVolkel2023}. These approaches address broad perturbative deformations. Here we examine a complementary inverse problem: whether one complex test-scalar frequency can determine the two parameters of a specified one-parameter metric family after the overall length scale has been separated.

Consider a homogeneous family
\begin{equation}
f(r;M,\beta)=\widetilde f\left(\frac{r}{M};\beta\right),
\label{eq:intro_scale_separation}
\end{equation}
where \(M>0\) is a geometric length and \(\beta\) is dimensionless. At fixed scalar mode \((l,n)\), dimensional analysis gives
\begin{equation}
\omega=\frac{1}{M}\widehat{\omega}_{ln}(\beta).
\label{eq:intro_frequency_scaling}
\end{equation}
We adopt the convention
\(\omega=\omega_R-i\omega_I\), with \(\omega_I>0\), and define
\(\tau=\omega_I^{-1}\). The ratio
\begin{equation}
\chi_{ln}
 \equiv \omega_R\tau
 =\frac{\omega_R}{\omega_I}
 =2Q_{ln}
\label{eq:intro_quality_coordinate}
\end{equation}
is independent of \(M\). If the map
\(\beta\mapsto\chi_{ln}(\beta)\) is injective on the chosen parameter domain, then \(\chi_{ln}\) determines \(\beta\) within that family, and either dimensionful part of the frequency determines \(M\). This statement defines a model-conditional inverse. It does not determine the metric family, perturbation equation, mode label, or theory from the frequency alone.

At leading eikonal order, Eq.~\eqref{eq:intro_quality_coordinate} defines an estimator rather than an exact finite-\(l\) reconstruction. We therefore distinguish the physical parameters from their eikonal estimates and examine the bias induced when a finite-\(l\) numerical frequency is inserted into the asymptotic inverse. We also construct the corresponding finite-\(l\) forward map and test its monotonicity numerically. This distinction is necessary because a small relative error in \(\chi_{ln}\) can produce a larger parameter error when the inverse is poorly conditioned or when the numerical ratio lies outside the image of the leading eikonal map.

Schwarzschild, Reissner--Nordstr\"om (RN), and the constant-coupling static solution of Scalar--Tensor--Vector Gravity, also called Modified Gravity (MOG), provide explicit applications \cite{Moffat:2005si,Moffat:2014aja}. Schwarzschild occupies a single leading-eikonal value of \(\chi_{ln}\). For RN, the neutral scalar equation depends on the signed charge-to-mass ratio \(q\) only through \(u=q^2\). The scalar spectrum can therefore determine \(u\), but it cannot determine the sign of the electric charge.

A stronger degeneracy occurs between RN and the static MOG geometry. Under
\begin{equation}
M=(1+\alpha)m_{\mathrm{MOG}},
\qquad
u=\frac{\alpha}{1+\alpha},
\label{eq:intro_rn_mog_map}
\end{equation}
the MOG lapse is exactly equal to the RN-form lapse. The two descriptions then have the same areal coordinate, asymptotic time normalization, tortoise coordinate, scalar differential operator, and quasinormal boundary conditions. Their complete minimally coupled neutral-scalar spectra consequently coincide for every \(l\) and \(n\), not only in the eikonal limit. Additional scalar modes or overtones cannot remove a degeneracy between two identical scalar boundary-value problems. This result does not imply equality of the theory-specific gravitational perturbation sectors.

The thermodynamic interpretation requires a further distinction. Once the normalized geometry is fixed, the outer horizon, surface gravity, and Hawking temperature are geometric. Black-hole entropy is instead a Noether-charge quantity determined by the gravitational action \cite{Wald:1993nt,Iyer:1994ys}. The mapped RN and MOG geometries therefore have equal temperatures but need not have equal entropies. In the constant-coupling, massless-vector MOG sector considered here, the curvature term contains the effective coupling \(G_\alpha=G_N(1+\alpha)\). Its Wald entropy is \(A_H/(4G_\alpha)\), whereas Einstein--Maxwell RN assigns \(A_H/(4G_N)\) to the same RN-form horizon area. A test-scalar spectrum can thus reconstruct model-conditional geometric thermodynamics, but it cannot determine an action-dependent entropy without a theory assignment.

The numerical analysis evaluates the fundamental scalar mode using two independent methods. A Chebyshev collocation calculation solves the radial boundary-value problem as a generalized eigenvalue problem, while a sixth-order WKB calculation with Pad\'e resummation supplies an independent check. The resulting frequencies are used to quantify both the forward eikonal error and the induced errors in the reconstructed metric and thermodynamic parameters. The exact RN--MOG scalar equality requires no separate numerical comparison because it follows analytically from the identity of the complete boundary-value problems.

The analysis is restricted to static, nonrotating, asymptotically flat, nonextremal black holes with a single relevant exterior photon sphere. For MOG thermodynamics, it applies only to the stated constant-coupling, massless-vector truncation. Rotation, extremality, dynamical gravitational couplings, charged or nonminimally coupled probes, environmental corrections, mode mixing, and detector-level multimode inference are outside the present scope. Discriminating between theories that share the same test-scalar problem requires additional information, such as an independent mass or charge constraint, a matter coupling that distinguishes the models, or a correctly derived theory-specific gravitational spectrum.

Section~\ref{sec:scalar_framework} develops the scalar inverse,
Secs.~\ref{sec:thermodynamic_reconstruction}--\ref{sec:numerical_validation}
treat thermodynamics, mapped applications, and finite-$l$ validation,
respectively, Sec.~\ref{sec:conclusion} states the conclusions, and
Appendices~\ref{app:wkb_derivation}--\ref{app:numerical_implementation} supply
the analytical and numerical details.

\section{Scale-invariant test-scalar framework and inverse estimators}
\label{sec:scalar_framework}

Consider a prescribed one-parameter family of static, spherically symmetric, asymptotically flat metrics,
\begin{equation}
ds^{2}
=
-f(r;M,\beta)\,dt^{2}
+\frac{dr^{2}}{f(r;M,\beta)}
+r^{2}d\Omega^{2},
\qquad
d\Omega^{2}=d\theta^{2}+\sin^{2}\theta\,d\phi^{2},
\label{eq:static_spherical_metric}
\end{equation}
where \(M>0\) is a geometric length and \(\beta\in I_{\beta}\) is dimensionless. Homogeneous scale separation means that
\begin{equation}
x=\frac{r}{M},
\qquad
\bar t=\frac{t}{M},
\qquad
f(r;M,\beta)=\widetilde f(x;\beta),
\label{eq:dimensionless_coordinates}
\end{equation}
so that
\begin{equation}
ds^{2}
=
M^{2}
\left[
-\widetilde f(x;\beta)\,d\bar t^{2}
+\frac{dx^{2}}{\widetilde f(x;\beta)}
+x^{2}d\Omega^{2}
\right].
\label{eq:dimensionless_metric}
\end{equation}
Asymptotic flatness fixes the normalization of the Killing time through
\(\widetilde f(x;\beta)\rightarrow 1\) as \(x\rightarrow\infty\).

\begin{definition}
\label{def:admissible_family}
A family \(\widetilde f(x;\beta)\) is admissible on the connected interval
\(I_{\beta}\) if it is at least \(C^{4}\) in the exterior region and possesses a simple outer horizon \(x_{\mathrm H}(\beta)\) satisfying
\begin{equation}
\widetilde f(x_{\mathrm H};\beta)=0,
\qquad
\partial_x\widetilde f(x_{\mathrm H};\beta)>0,
\qquad
\widetilde f(x;\beta)>0
\quad
\text{for }
x>x_{\mathrm H}.
\label{eq:admissible_horizon}
\end{equation}
The exterior must also contain one relevant unstable circular null orbit
\(x=x_{\mathrm c}(\beta)>x_{\mathrm H}(\beta)\). Families with degenerate horizons or several competing exterior photon-sphere branches require a separate branch analysis and are excluded here.
\end{definition}

A neutral, minimally coupled massless scalar field obeys
\begin{equation}
\Box\Phi
\equiv
\frac{1}{\sqrt{-g}}
\partial_{\mu}
\left(
\sqrt{-g}\,g^{\mu\nu}\partial_{\nu}\Phi
\right)
=0.
\label{eq:massless_klein_gordon}
\end{equation}
For the metric in Eq.~\eqref{eq:static_spherical_metric},
\(\sqrt{-g}=r^{2}\sin\theta\). We separate the field as
\begin{equation}
\Phi(t,r,\theta,\phi)
=
\frac{\psi_{\omega l}(r)}{r}
Y_{lm}(\theta,\phi)e^{-i\omega t},
\qquad
\nabla^{2}_{S^{2}}Y_{lm}
=
-l(l+1)Y_{lm}.
\label{eq:scalar_mode_decomposition}
\end{equation}
Direct substitution into Eq.~\eqref{eq:massless_klein_gordon} gives
\begin{align}
0
&=
\frac{f}{r^{2}}
\frac{d}{dr}
\left[
r^{2}f
\frac{d}{dr}
\left(
\frac{\psi_{\omega l}}{r}
\right)
\right]
+
\left[
\omega^{2}
-\frac{f\,l(l+1)}{r^{2}}
\right]
\frac{\psi_{\omega l}}{r}
\nonumber\\
\intertext{and, after expanding the radial derivatives,}
0
&=
f\frac{d}{dr}
\left(
f\frac{d\psi_{\omega l}}{dr}
\right)
+
\left[
\omega^{2}
-f
\left(
\frac{l(l+1)}{r^{2}}
+\frac{f_{,r}}{r}
\right)
\right]
\psi_{\omega l}.
\label{eq:physical_scalar_radial_equation}
\end{align}
The reduction in Eq.~\eqref{eq:physical_scalar_radial_equation} has been checked symbolically without imposing a particular form of \(f(r)\).

Define the tortoise coordinates and dimensionless frequency by
\begin{equation}
\frac{dr_{*}}{dr}=\frac{1}{f},
\qquad
x_{*}=\frac{r_{*}}{M},
\qquad
\widehat\omega=M\omega.
\label{eq:tortoise_and_scaled_frequency}
\end{equation}
Since \(d/dx_{*}=\widetilde f\,d/dx\), Eq.~\eqref{eq:physical_scalar_radial_equation} becomes
\begin{equation}
\frac{d^{2}\psi_{\omega l}}{dx_{*}^{2}}
+
\left[
\widehat\omega^{\,2}
-V_{l}(x;\beta)
\right]
\psi_{\omega l}
=0,
\qquad
V_{l}
=
\widetilde f
\left[
\frac{l(l+1)}{x^{2}}
+\frac{\partial_x\widetilde f}{x}
\right].
\label{eq:dimensionless_scalar_equation}
\end{equation}
Every term in Eq.~\eqref{eq:dimensionless_scalar_equation} is dimensionless. For the time dependence \(e^{-i\omega t}\), the quasinormal boundary conditions are
\begin{equation}
\psi_{\omega l}
\sim
\begin{cases}
e^{-i\widehat\omega x_{*}},
& x_{*}\rightarrow-\infty,\\[2mm]
e^{+i\widehat\omega x_{*}},
& x_{*}\rightarrow+\infty,
\end{cases}
\label{eq:qnm_boundary_conditions}
\end{equation}
corresponding to an ingoing wave at the future horizon and an outgoing wave at spatial infinity \cite{Berti:2009kk}.

For a fixed angular number \(l\), overtone number \(n\), and continuously identified spectral branch, write the exact dimensionless quasinormal frequency as
\begin{equation}
\widehat\omega_{ln}^{\mathrm{spec}}(\beta)
=
\mathcal A_{ln}(\beta)
-i\mathcal B_{ln}(\beta),
\qquad
\mathcal A_{ln}>0,
\qquad
\mathcal B_{ln}>0.
\label{eq:exact_dimensionless_spectrum}
\end{equation}
The superscript ``spec'' denotes the spectrum of the complete boundary-value problem in Eqs.~\eqref{eq:dimensionless_scalar_equation} and \eqref{eq:qnm_boundary_conditions}; it does not imply an eikonal approximation. The physical observables satisfy
\begin{equation}
\omega_{R}
=
\frac{\mathcal A_{ln}(\beta)}{M},
\qquad
\omega_{I}
=
\frac{\mathcal B_{ln}(\beta)}{M},
\qquad
\tau=\omega_{I}^{-1}
=
\frac{M}{\mathcal B_{ln}(\beta)}.
\label{eq:exact_spectral_scaling}
\end{equation}
The exact scale-free ratio for this branch is therefore
\begin{equation}
\chi_{ln}^{\mathrm{spec}}(\beta)
\equiv
\omega_{R}\tau
=
\frac{\omega_{R}}{\omega_{I}}
=
\frac{\mathcal A_{ln}(\beta)}
{\mathcal B_{ln}(\beta)}
=
2Q_{ln}.
\label{eq:exact_quality_coordinate}
\end{equation}

\begin{definition}
\label{def:spectrally_regular_branch}
A fixed mode branch \((l,n)\) is spectrally regular on \(I_{\beta}\) if
\(\widehat\omega_{ln}^{\mathrm{spec}}(\beta)\) is simple and continuously differentiable, \(\mathcal B_{ln}(\beta)>0\), and
\(\partial_{\beta}\chi_{ln}^{\mathrm{spec}}\neq0\) in the interior of \(I_{\beta}\).
\end{definition}

\begin{theorem}
\label{thm:exact_spectral_inversion}
Let a mode branch be spectrally regular on the connected interval
\(I_{\beta}\). Then \(\chi_{ln}^{\mathrm{spec}}\) is strictly monotonic and possesses a unique inverse
\(\mathcal H_{ln}^{\mathrm{spec}}\) on its image. The metric parameter and geometric scale are
\begin{equation}
\beta
=
\mathcal H_{ln}^{\mathrm{spec}}(\chi),
\qquad
M
=
\frac{\mathcal A_{ln}(\beta)}{\omega_{R}}
=
\mathcal B_{ln}(\beta)\tau.
\label{eq:exact_within_family_inverse}
\end{equation}
\end{theorem}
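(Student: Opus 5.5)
The plan is to prove strict monotonicity first, then invert, then recover the scale. Spectral regularity (Definition~\ref{def:spectrally_regular_branch}) gives that $\mathcal A_{ln}$ and $\mathcal B_{ln}$ are $C^{1}$ on $I_\beta$ and that $\mathcal B_{ln}>0$. Hence the quotient $\chi_{ln}^{\mathrm{spec}}=\mathcal A_{ln}/\mathcal B_{ln}$ is $C^{1}$, with
\begin{equation*}
\partial_\beta\chi_{ln}^{\mathrm{spec}}=\frac{\mathcal B_{ln}\,\partial_\beta\mathcal A_{ln}-\mathcal A_{ln}\,\partial_\beta\mathcal B_{ln}}{\mathcal B_{ln}^{2}}.
\end{equation*}
By hypothesis this derivative never vanishes in the interior of $I_\beta$.

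The interior of a connected interval is an open interval, and a continuous derivative has the intermediate value property. So $\partial_\beta\chi_{ln}^{\mathrm{spec}}$ cannot change sign in the interior: a sign change would force a zero, contradicting regularity. Even without continuity of the derivative, Darboux's theorem gives the same conclusion. The mean value theorem then makes $\chi_{ln}^{\mathrm{spec}}$ strictly monotonic on the interior. Continuity of $\chi_{ln}^{\mathrm{spec}}$ extends strict monotonicity to any endpoints of $I_\beta$, since a limit of strict inequalities along monotone sequences stays strict once an interior point is fixed in between.

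A strictly monotonic continuous function on an interval is injective. Its image $J=\chi_{ln}^{\mathrm{spec}}(I_\beta)$ is an interval, and there is a unique inverse $\mathcal H_{ln}^{\mathrm{spec}}:J\to I_\beta$, which is itself continuous and strictly monotonic. By the inverse function theorem in one dimension it is also $C^{1}$ on the interior of $J$, with derivative $1/\partial_\beta\chi_{ln}^{\mathrm{spec}}$. I would record this because it quantifies how errors in $\chi$ propagate to errors in $\beta$.

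For the reconstruction, suppose the observed $(\omega_R,\omega_I)$ comes from this branch of the family with true parameters $(M,\beta)$. The scaling relation Eq.~\eqref{eq:exact_spectral_scaling} gives $\chi=\omega_R/\omega_I=\chi_{ln}^{\mathrm{spec}}(\beta)$, with the factors of $M$ cancelling. So $\chi\in J$ and $\beta=\mathcal H_{ln}^{\mathrm{spec}}(\chi)$ by injectivity. Substituting back into Eq.~\eqref{eq:exact_spectral_scaling} gives $M=\mathcal A_{ln}(\beta)/\omega_R$ and $M=\mathcal B_{ln}(\beta)\,\tau$. The two expressions agree because both equal the true $M$; this uses $\mathcal A_{ln}>0$ and $\mathcal B_{ln}>0$ so that the divisions are legitimate.

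The main obstacle is the endpoint and sign issue. Regularity is stated only in the interior, so I would argue monotonicity on the interior first and extend by continuity as above. The proof should also note that uniqueness is only \emph{within the branch and family}: the result assumes the mode label $(l,n)$ and the family are given, and makes no claim about other branches.
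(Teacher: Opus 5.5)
Your proof is correct and follows essentially the same route as the paper: continuity and non-vanishing of $\partial_\beta\chi_{ln}^{\mathrm{spec}}$ on the connected interior rule out a sign change, which gives strict monotonicity and a unique inverse on the image, and the two expressions for $M$ then follow from Eq.~\eqref{eq:exact_spectral_scaling}. The paper leaves the extension of strict monotonicity to the endpoints by continuity, and the regularity of $\mathcal H_{ln}^{\mathrm{spec}}$, implicit; your proof makes both explicit, which is a modest gain in rigor.
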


\textbf{Proof:} Because
\(\partial_{\beta}\chi_{ln}^{\mathrm{spec}}\) is continuous and nonzero in the interior of a connected interval, it cannot change sign there. Hence
\(\chi_{ln}^{\mathrm{spec}}\) is strictly monotonic. The inverse-function theorem gives a local inverse at each interior point, while strict monotonicity extends the inverse to the complete image. The two expressions for \(M\) follow directly from Eq.~\eqref{eq:exact_spectral_scaling}.

Theorem~\ref{thm:exact_spectral_inversion} is conditional on a prescribed metric family, field equation, mode assignment, and spectral branch. It does not identify these ingredients from the measured frequency. In practice,
\(\mathcal A_{ln}\), \(\mathcal B_{ln}\), and
\(\mathcal H_{ln}^{\mathrm{spec}}\) must be computed numerically except in special cases. Section~\ref{sec:numerical_validation} constructs this finite-\(l\) map for the RN-form family.

We next derive the leading eikonal estimator. Introduce
\begin{equation}
L=l+\frac{1}{2},
\qquad
N=n+\frac{1}{2},
\label{eq:eikonal_indices}
\end{equation}
so that \(l(l+1)=L^{2}-1/4\). The scalar potential separates exactly as
\begin{equation}
V_{l}(x;\beta)
=
L^{2}U(x;\beta)+W(x;\beta),
\qquad
U=\frac{\widetilde f}{x^{2}},
\qquad
W=
\widetilde f
\left(
\frac{\partial_x\widetilde f}{x}
-\frac{1}{4x^{2}}
\right).
\label{eq:eikonal_potential_split}
\end{equation}
The leading potential maximum satisfies
\begin{equation}
\partial_x U
=
\frac{x\partial_x\widetilde f-2\widetilde f}{x^{3}}
=0.
\label{eq:photon_sphere_derivative}
\end{equation}
Thus the exterior circular null orbit \(x=x_{\mathrm c}\) obeys
\begin{equation}
2\widetilde f_{\mathrm c}
-
x_{\mathrm c}\widetilde f'_{\mathrm c}
=0,
\qquad
U''_{\mathrm c}<0,
\label{eq:photon_sphere_conditions}
\end{equation}
where a prime denotes \(d/dx\) and the subscript \(\mathrm c\) denotes evaluation at \(x_{\mathrm c}\). These are the circularity and instability conditions underlying the eikonal geodesic correspondence \cite{Cardoso:2008bp}.

\begin{lemma}
\label{lem:photon_sphere_coefficients}
For an admissible family, define
\begin{equation}
\Omega_{\mathrm c}
=
\frac{\sqrt{\widetilde f_{\mathrm c}}}{x_{\mathrm c}},
\qquad
\Lambda_{\mathrm c}
=
\frac{1}{x_{\mathrm c}}
\left(
\widetilde f_{\mathrm c}^{\,2}
-\frac{x_{\mathrm c}^{2}}{2}
\widetilde f_{\mathrm c}\widetilde f''_{\mathrm c}
\right)^{1/2}.
\label{eq:photon_sphere_frequency_lyapunov}
\end{equation}
Then
\begin{equation}
\Omega_{\mathrm c}^{2}=U_{\mathrm c},
\qquad
\Lambda_{\mathrm c}^{2}
=
-\frac{\widetilde f_{\mathrm c}^{\,2}U''_{\mathrm c}}
{2U_{\mathrm c}}
>0.
\label{eq:lyapunov_potential_identity}
\end{equation}
\end{lemma}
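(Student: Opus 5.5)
The plan is a direct computation at the photon sphere. The only input beyond the definitions is the circularity condition in Eq.~\eqref{eq:photon_sphere_conditions}, which we use to eliminate $\widetilde f'_{\mathrm c}$ in favour of $\widetilde f_{\mathrm c}$. The first identity is immediate: squaring $\Omega_{\mathrm c}$ gives $\widetilde f_{\mathrm c}/x_{\mathrm c}^{2}$, which equals $U_{\mathrm c}$ by the definition of $U$ in Eq.~\eqref{eq:eikonal_potential_split}. Positivity of $\widetilde f_{\mathrm c}$, needed for the square root to be real, follows from $x_{\mathrm c}>x_{\mathrm H}$ and the admissibility condition \eqref{eq:admissible_horizon}.

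For the second identity we compute $U''$. Write $U=\widetilde f\,x^{-2}$, so that
\begin{equation*}
U''=\frac{\widetilde f''}{x^{2}}-\frac{4\widetilde f'}{x^{3}}+\frac{6\widetilde f}{x^{4}}.
\end{equation*}
At $x_{\mathrm c}$, circularity gives $\widetilde f'_{\mathrm c}=2\widetilde f_{\mathrm c}/x_{\mathrm c}$. Substituting this, the last two terms combine to $(6-8)\widetilde f_{\mathrm c}/x_{\mathrm c}^{4}$, hence
\begin{equation*}
U''_{\mathrm c}=\frac{\widetilde f''_{\mathrm c}}{x_{\mathrm c}^{2}}-\frac{2\widetilde f_{\mathrm c}}{x_{\mathrm c}^{4}}.
\end{equation*}
Next we form $-\widetilde f_{\mathrm c}^{2}U''_{\mathrm c}/(2U_{\mathrm c})$ using $U_{\mathrm c}=\widetilde f_{\mathrm c}/x_{\mathrm c}^{2}$:
\begin{equation*}
-\frac{\widetilde f_{\mathrm c}^{2}U''_{\mathrm c}}{2U_{\mathrm c}}
=-\frac{\widetilde f_{\mathrm c}\,x_{\mathrm c}^{2}}{2}\,U''_{\mathrm c}
=\frac{1}{x_{\mathrm c}^{2}}\Bigl(\widetilde f_{\mathrm c}^{2}-\frac{x_{\mathrm c}^{2}}{2}\widetilde f_{\mathrm c}\widetilde f''_{\mathrm c}\Bigr).
\end{equation*}
This is exactly $\Lambda_{\mathrm c}^{2}$ as defined in Eq.~\eqref{eq:photon_sphere_frequency_lyapunov}.

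The only real issue is positivity, and with it the well-definedness of the square root in the definition of $\Lambda_{\mathrm c}$. The argument is short. Admissibility requires the relevant orbit to be unstable, which means $U''_{\mathrm c}<0$ as stated in Eq.~\eqref{eq:photon_sphere_conditions}. Since also $\widetilde f_{\mathrm c}>0$ and $U_{\mathrm c}>0$, the right-hand side of the identity is strictly positive. Consequently the radicand in $\Lambda_{\mathrm c}$ is positive, and $\Lambda_{\mathrm c}$ is a real positive number.

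Because $\Lambda_{\mathrm c}$ is defined through a square root, one must check that the order of operations is consistent. We therefore establish the identity for the radicand first, deduce its positivity from it, and only then take the square root. Regularity is not a concern: $C^{4}$ smoothness of $\widetilde f$ more than suffices for $\widetilde f''_{\mathrm c}$ to exist.
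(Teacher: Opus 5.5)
Your proposal is correct and follows the paper's proof essentially line by line. Like the paper, you differentiate $U=\widetilde f/x^{2}$ twice, use circularity to obtain $U''_{\mathrm c}=\widetilde f''_{\mathrm c}/x_{\mathrm c}^{2}-2\widetilde f_{\mathrm c}/x_{\mathrm c}^{4}$, substitute into $-\widetilde f_{\mathrm c}^{2}U''_{\mathrm c}/(2U_{\mathrm c})$, and take positivity from $U''_{\mathrm c}<0$, while additionally making explicit the points the paper leaves implicit ($\widetilde f_{\mathrm c}>0$ from admissibility, and the radicand being established before taking the square root).
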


\textbf{Proof:} The first identity follows from
\(U_{\mathrm c}=\widetilde f_{\mathrm c}/x_{\mathrm c}^{2}\).
Differentiating \(U=\widetilde f/x^{2}\) twice and using
\(\widetilde f'_{\mathrm c}=2\widetilde f_{\mathrm c}/x_{\mathrm c}\) gives
\begin{equation}
U''_{\mathrm c}
=
\frac{\widetilde f''_{\mathrm c}}{x_{\mathrm c}^{2}}
-\frac{2\widetilde f_{\mathrm c}}{x_{\mathrm c}^{4}}.
\label{eq:second_derivative_U}
\end{equation}
Substitution into
\(-\widetilde f_{\mathrm c}^{2}U''_{\mathrm c}/(2U_{\mathrm c})\)
produces Eq.~\eqref{eq:photon_sphere_frequency_lyapunov}. Positivity follows from \(U''_{\mathrm c}<0\). The algebraic identity has been verified symbolically for an arbitrary differentiable lapse.

Let \(x_{p}\) be the maximum of the complete finite-\(L\) potential. The ordered expansion derived in Appendix~\ref{app:wkb_derivation} gives
\begin{equation}
x_{p}
=
x_{\mathrm c}
+O(L^{-2}),
\qquad
V_{p}
=
L^{2}U_{\mathrm c}
+O(L^{0}),
\qquad
V_{p}^{(2)}
=
L^{2}\widetilde f_{\mathrm c}^{\,2}U''_{\mathrm c}
+O(L^{0}),
\label{eq:eikonal_peak_ordering}
\end{equation}
where \(V_{p}^{(2)}\) is the second derivative with respect to
\(x_{*}\). At a stationary point,
\(d^{2}V/dx_{*}^{2}=\widetilde f^{\,2}d^{2}V/dx^{2}\), because the term proportional to \(dV/dx\) vanishes.

The first-order barrier-top WKB condition is \cite{Schutz:1985zz}
\begin{equation}
i\frac{\widehat\omega^{\,2}-V_{p}}
{\sqrt{-2V_{p}^{(2)}}}
=
N.
\label{eq:first_order_wkb_condition}
\end{equation}
Using Eq.~\eqref{eq:eikonal_peak_ordering} yields
\begin{align}
\widehat\omega^{\,2}
&=
L^{2}U_{\mathrm c}
-iNL
\sqrt{-2\widetilde f_{\mathrm c}^{\,2}U''_{\mathrm c}}
+O(L^{0})
\nonumber\\
\intertext{and selecting the root with positive real part and negative imaginary part gives}
\widehat\omega_{Ln}^{\mathrm{eik}}
&=
L\sqrt{U_{\mathrm c}}
-iN
\sqrt{
-\frac{\widetilde f_{\mathrm c}^{\,2}U''_{\mathrm c}}
{2U_{\mathrm c}}
}
+O(L^{-1})
\nonumber\\
&=
L\Omega_{\mathrm c}
-iN\Lambda_{\mathrm c}
+O(L^{-1}).
\label{eq:leading_eikonal_frequency}
\end{align}
This expansion assumes fixed \(n\) as \(L\rightarrow\infty\), equivalently \(N/L\ll1\), and agrees with the systematic inverse-\(L\) structure of test-field quasinormal frequencies \cite{Dolan:2009nk}.

Define the leading eikonal spectral functions
\begin{equation}
\mathcal A_{Ln}^{\mathrm{eik}}(\beta)
=
L\Omega_{\mathrm c}(\beta),
\qquad
\mathcal B_{Ln}^{\mathrm{eik}}(\beta)
=
N\Lambda_{\mathrm c}(\beta),
\label{eq:eikonal_spectral_functions}
\end{equation}
and the corresponding scale-free coordinate
\begin{equation}
\chi_{Ln}^{\mathrm{eik}}(\beta)
=
\frac{L\Omega_{\mathrm c}(\beta)}
{N\Lambda_{\mathrm c}(\beta)}.
\label{eq:eikonal_quality_coordinate}
\end{equation}
For the fundamental overtone \(n=0\),
\begin{equation}
\chi_{L0}^{\mathrm{eik}}
=
\frac{2L\Omega_{\mathrm c}}{\Lambda_{\mathrm c}}.
\label{eq:fundamental_eikonal_quality_coordinate}
\end{equation}

\begin{theorem}
\label{thm:eikonal_inverse_estimator}
Suppose that
\(\Omega_{\mathrm c}>0\),
\(\Lambda_{\mathrm c}>0\), and
\(\chi_{Ln}^{\mathrm{eik}}\in C^{1}(I_{\beta})\), with
\(\partial_{\beta}\chi_{Ln}^{\mathrm{eik}}\neq0\) in the interior of
\(I_{\beta}\). Then \(\chi_{Ln}^{\mathrm{eik}}\) has a unique inverse
\(\mathcal H_{Ln}^{\mathrm{eik}}\) on its image. For measured
\((\omega_{R},\tau)\), the leading eikonal estimators are
\begin{equation}
\widehat\beta_{\mathrm{eik}}
=
\mathcal H_{Ln}^{\mathrm{eik}}(\omega_{R}\tau),
\qquad
\widehat M_{R,\mathrm{eik}}
=
\frac{L\Omega_{\mathrm c}(\widehat\beta_{\mathrm{eik}})}
{\omega_{R}},
\qquad
\widehat M_{I,\mathrm{eik}}
=
N\Lambda_{\mathrm c}(\widehat\beta_{\mathrm{eik}})\tau.
\label{eq:eikonal_parameter_estimators}
\end{equation}
\end{theorem}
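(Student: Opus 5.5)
The plan mirrors the proof of Theorem~\ref{thm:exact_spectral_inversion}, with the exact spectral functions replaced by the leading eikonal functions of Eq.~\eqref{eq:eikonal_spectral_functions}. First, the hypotheses \(\Omega_{\mathrm c}>0\) and \(\Lambda_{\mathrm c}>0\) ensure that \(\chi_{Ln}^{\mathrm{eik}}=L\Omega_{\mathrm c}/(N\Lambda_{\mathrm c})\) is a well-defined, finite, positive real function on \(I_{\beta}\). Here \(L,N\geq 1/2\) are fixed positive constants. Lemma~\ref{lem:photon_sphere_coefficients} already supplies \(\Lambda_{\mathrm c}^{2}>0\) for admissible families, so the square roots in Eq.~\eqref{eq:photon_sphere_frequency_lyapunov} are real.

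Next, since \(\chi_{Ln}^{\mathrm{eik}}\in C^{1}(I_{\beta})\), the derivative \(\partial_{\beta}\chi_{Ln}^{\mathrm{eik}}\) is continuous and, by assumption, nonvanishing on the interior of the connected interval \(I_{\beta}\). By the intermediate value theorem it has a fixed sign there. The mean value theorem then gives strict monotonicity on the interior, and continuity extends this to any endpoints included in \(I_{\beta}\). A strictly monotonic continuous function on an interval is a bijection onto its image, which is itself an interval. Hence \(\mathcal H_{Ln}^{\mathrm{eik}}\) exists and is unique, and it is continuous. By the inverse-function theorem it is also \(C^{1}\) at interior points, with derivative \(1/\partial_{\beta}\chi_{Ln}^{\mathrm{eik}}\).

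Finally, the estimators are defined by formally equating the measured pair to the leading eikonal forward model. That is, we set \(\omega_{R}=\mathcal A_{Ln}^{\mathrm{eik}}(\beta)/M\) and \(\omega_{I}=\tau^{-1}=\mathcal B_{Ln}^{\mathrm{eik}}(\beta)/M\), in analogy with Eq.~\eqref{eq:exact_spectral_scaling}. Taking the ratio eliminates \(M\) and gives \(\omega_{R}\tau=\chi_{Ln}^{\mathrm{eik}}(\beta)\). Whenever \(\omega_{R}\tau\) lies in the image of \(\chi_{Ln}^{\mathrm{eik}}\), this yields \(\widehat\beta_{\mathrm{eik}}=\mathcal H_{Ln}^{\mathrm{eik}}(\omega_{R}\tau)\). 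Solving each scaling relation for \(M\) at \(\widehat\beta_{\mathrm{eik}}\) gives \(\widehat M_{R,\mathrm{eik}}\) and \(\widehat M_{I,\mathrm{eik}}\) in Eq.~\eqref{eq:eikonal_parameter_estimators}. When the input frequency exactly satisfies the eikonal model, the two mass estimators coincide.

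The main obstacle is conceptual rather than technical. The statement concerns estimators, not exact parameters, so the proof must not claim that \(\widehat\beta_{\mathrm{eik}}=\beta\) for a finite-\(l\) frequency. For such input the \(O(L^{-1})\) remainder in Eq.~\eqref{eq:leading_eikonal_frequency} induces a bias, and \(\omega_{R}\tau\) may fall outside the image of \(\chi_{Ln}^{\mathrm{eik}}\). I would therefore state explicitly that the estimators are defined only on that image, and that \(\widehat M_{R,\mathrm{eik}}\) and \(\widehat M_{I,\mathrm{eik}}\) may differ when applied to data not generated by the eikonal model.
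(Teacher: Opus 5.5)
Your proposal is correct and follows essentially the same route as the paper. The paper's proof reuses the monotonicity argument of Theorem~\ref{thm:exact_spectral_inversion} (a continuous, nonvanishing derivative on a connected interval has fixed sign, so the map is strictly monotonic and invertible on its image) and reads the estimators off the leading terms of Eq.~\eqref{eq:leading_eikonal_frequency}, calling them estimators because of the omitted $O(L^{-1})$ term; your use of the mean value theorem, your handling of the endpoints of $I_{\beta}$, and your restriction of the estimators to the image of $\chi_{Ln}^{\mathrm{eik}}$ make that same argument more explicit.
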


\textbf{Proof:}
The monotonicity argument is the same as in
Theorem~\ref{thm:exact_spectral_inversion}. The quantities in
Eq.~\eqref{eq:eikonal_parameter_estimators} follow from the leading terms of Eq.~\eqref{eq:leading_eikonal_frequency}. They are estimators because the omitted \(O(L^{-1})\) term changes both the real and imaginary parts at finite \(l\).

The two mass estimators in Eq.~\eqref{eq:eikonal_parameter_estimators} agree only when the data satisfy the leading eikonal model. Their difference at finite \(l\) is a model-consistency diagnostic, not an additional independent observable.

The inverse bias can be written explicitly. Let the complete finite-\(l\) spectral functions be
\begin{equation}
\mathcal A_{ln}
=
\mathcal A_{Ln}^{\mathrm{eik}}
+\Delta\mathcal A_{ln},
\qquad
\mathcal B_{ln}
=
\mathcal B_{Ln}^{\mathrm{eik}}
+\Delta\mathcal B_{ln},
\qquad
\chi_{ln}^{\mathrm{spec}}
=
\chi_{Ln}^{\mathrm{eik}}
+\Delta\chi_{ln}.
\label{eq:finite_l_spectral_corrections}
\end{equation}
To first order,
\begin{equation}
\Delta\chi_{ln}
=
\chi_{Ln}^{\mathrm{eik}}
\left(
\frac{\Delta\mathcal A_{ln}}
{\mathcal A_{Ln}^{\mathrm{eik}}}
-
\frac{\Delta\mathcal B_{ln}}
{\mathcal B_{Ln}^{\mathrm{eik}}}
\right)
+O(\Delta^{2}).
\label{eq:finite_l_ratio_cancellation}
\end{equation}
Equation~\eqref{eq:finite_l_ratio_cancellation} shows why the ratio error can be much smaller than the separate errors in the real and imaginary parts.

If exact finite-\(l\) data generated at \(\beta\) are inserted into the eikonal inverse, Taylor expansion gives
\begin{equation}
\widehat\beta_{\mathrm{eik}}-\beta
=
\frac{\Delta\chi_{ln}}
{\partial_{\beta}\chi_{Ln}^{\mathrm{eik}}}
-
\frac{
\partial_{\beta}^{2}\chi_{Ln}^{\mathrm{eik}}
}{
2\left(
\partial_{\beta}\chi_{Ln}^{\mathrm{eik}}
\right)^{3}
}
\left(
\Delta\chi_{ln}
\right)^{2}
+O(\Delta\chi_{ln}^{3}).
\label{eq:eikonal_parameter_bias}
\end{equation}
The first-order mass biases are
\begin{align}
\frac{\widehat M_{R,\mathrm{eik}}-M}{M}
&=
\left(
\partial_{\beta}
\ln\mathcal A_{Ln}^{\mathrm{eik}}
\right)
\frac{\Delta\chi_{ln}}
{\partial_{\beta}\chi_{Ln}^{\mathrm{eik}}}
-
\frac{\Delta\mathcal A_{ln}}
{\mathcal A_{Ln}^{\mathrm{eik}}}
+O(\Delta^{2}),
\label{eq:real_mass_estimator_bias}\\
\frac{\widehat M_{I,\mathrm{eik}}-M}{M}
&=
\left(
\partial_{\beta}
\ln\mathcal B_{Ln}^{\mathrm{eik}}
\right)
\frac{\Delta\chi_{ln}}
{\partial_{\beta}\chi_{Ln}^{\mathrm{eik}}}
-
\frac{\Delta\mathcal B_{ln}}
{\mathcal B_{Ln}^{\mathrm{eik}}}
+O(\Delta^{2}).
\label{eq:imaginary_mass_estimator_bias}
\end{align}
The inverse and mass-bias expansions in
Eqs.~\eqref{eq:eikonal_parameter_bias}--\eqref{eq:imaginary_mass_estimator_bias} have been verified symbolically. They demonstrate that a small forward error in \(\chi\) does not by itself guarantee a small reconstruction error. The factor
\(|\partial_{\beta}\chi|^{-1}\) controls the local conditioning.

For observational uncertainties, let \(\omega_{R}\) and \(\tau\) have variances
\(\sigma_{\omega_{R}}^{2}\) and \(\sigma_{\tau}^{2}\), with covariance
\(\operatorname{Cov}(\omega_{R},\tau)\). Since
\(\chi=\omega_{R}\tau\), linear propagation gives
\begin{equation}
\sigma_{\chi}^{2}
=
\tau^{2}\sigma_{\omega_{R}}^{2}
+
\omega_{R}^{2}\sigma_{\tau}^{2}
+
2\omega_{R}\tau\,
\operatorname{Cov}(\omega_{R},\tau).
\label{eq:quality_factor_covariance}
\end{equation}
For an interior point of a regular inverse branch,
\begin{equation}
\sigma_{\beta,\mathrm{stat}}^{2}
=
\frac{\sigma_{\chi}^{2}}
{\left(
\partial_{\beta}\chi
\right)^{2}},
\label{eq:local_parameter_variance}
\end{equation}
where \(\chi\) must be chosen consistently as either the finite-\(l\) calibrated map or the eikonal model. The systematic biases in
Eqs.~\eqref{eq:eikonal_parameter_bias}--\eqref{eq:imaginary_mass_estimator_bias} are separate from the statistical covariance and must not be omitted or counted twice.

Equation~\eqref{eq:local_parameter_variance} is a local Gaussian approximation. It is inadequate when the reconstructed parameter lies near a physical boundary, when the likelihood extends outside the image of \(\chi\), or when a discrete degeneracy such as \(q\leftrightarrow -q\) is present. These cases require a constrained likelihood or posterior on the physical parameter domain.

The framework therefore contains two distinct inversions. The finite-\(l\) inverse in Theorem~\ref{thm:exact_spectral_inversion} is exact within the prescribed scalar boundary-value problem once its spectral functions are known. The inverse in Theorem~\ref{thm:eikonal_inverse_estimator} is analytic but asymptotic. Section~\ref{sec:numerical_validation} compares these constructions and quantifies the induced biases before either is used in the thermodynamic reconstruction.

\section{Thermodynamic reconstruction and the status of scrambling comparisons}
\label{sec:thermodynamic_reconstruction}

This section separates quantities fixed by the normalized geometry from those that require the gravitational action. We use \(c=\hbar=k_B=1\). The geometric scale \(M\) and the generic dimensionless family parameter \(\beta\) are those introduced in Sec.~\ref{sec:scalar_framework}; Newton's constant is restored when required by the entropy normalization.

Let \(X(\beta)\) be the largest positive root satisfying
\begin{equation}
\widetilde f[X(\beta);\beta]=0,\qquad \partial_x\widetilde f[X(\beta);\beta]>0.
\label{eq:horizon_root_definition}
\end{equation}
The derivative condition selects a simple nonextremal outer horizon. Implicit differentiation gives
\begin{equation}
X'(\beta)=-\left.\frac{\partial_\beta\widetilde f}{\partial_x\widetilde f}\right|_{x=X(\beta)}.
\label{eq:implicit_horizon_derivative}
\end{equation}

For \(f(r;M,\beta)=\widetilde f(r/M;\beta)\), the physical horizon radius is \(r_H=MX(\beta)\). Since \(t=M\bar t\), the Killing field normalized at infinity is \(\partial_t=M^{-1}\partial_{\bar t}\), and \(f_{,r}=M^{-1}\partial_x\widetilde f\). The normalized surface gravity and Hawking temperature are therefore
\begin{equation}
\kappa_H(M,\beta)=\frac{\partial_x\widetilde f[X(\beta);\beta]}{2M},
\label{eq:normalized_surface_gravity}
\end{equation}
\begin{equation}
T_H(M,\beta)=\frac{\partial_x\widetilde f[X(\beta);\beta]}{4\pi M}\equiv\frac{\Theta(\beta)}{M},\qquad \Theta(\beta)=\frac{\partial_x\widetilde f[X(\beta);\beta]}{4\pi}.
\label{eq:normalized_hawking_temperature}
\end{equation}
The factor \(M^{-1}\) follows from the asymptotic normalization of the physical time coordinate \cite{Hawking:1975vc}. Thus the normalized geometry uniquely fixes the dimensionless temperature function \(\Theta(\beta)\).

The thermodynamic map must distinguish the exact finite-\(l\) inverse from its leading eikonal approximation. For a known finite-\(l\) spectral map,
\begin{equation}
\beta_{\rm spec}=\mathcal H_{ln}^{\rm spec}(\chi),\qquad M_{\rm spec}=\frac{\mathcal A_{ln}(\beta_{\rm spec})}{\omega_R}=\mathcal B_{ln}(\beta_{\rm spec})\tau,\qquad \chi=\omega_R\tau.
\label{eq:exact_spectral_thermodynamic_parameters}
\end{equation}
The corresponding temperature is
\begin{equation}
T_H^{\rm spec}(\omega_R,\tau)=\frac{\Theta[\mathcal H_{ln}^{\rm spec}(\omega_R\tau)]}{M_{\rm spec}(\omega_R,\tau)}.
\label{eq:exact_spectral_temperature_map}
\end{equation}
This relation is exact within the selected metric family, scalar equation, parameter domain, mode labels, and continuously identified spectral branch.

At leading eikonal order,
\begin{equation}
\widehat\beta_{\rm eik}=\mathcal H_{Ln}^{\rm eik}(\chi),\qquad \widehat M_{R,\rm eik}=\frac{L\Omega_c(\widehat\beta_{\rm eik})}{\omega_R},\qquad \widehat M_{I,\rm eik}=N\Lambda_c(\widehat\beta_{\rm eik})\tau.
\label{eq:eikonal_thermodynamic_parameters}
\end{equation}
The associated temperature estimators are
\begin{equation}
\widehat T_{H,A}^{\rm eik}=\frac{\Theta(\widehat\beta_{\rm eik})}{\widehat M_{A,\rm eik}},\qquad A\in\{R,I\}.
\label{eq:eikonal_temperature_estimators}
\end{equation}
A difference between \(\widehat T_{H,R}^{\rm eik}\) and \(\widehat T_{H,I}^{\rm eik}\) measures the failure of a finite-\(l\) mode to satisfy the leading eikonal model; it is not an ambiguity in the geometric definition of \(T_H\). The reconstruction also cannot distinguish theories with identical scalar boundary-value problems or identify a test-scalar mode with an observed tensor mode.

The geometry fixes the horizon area,
\begin{equation}
A_H(M,\beta)=4\pi M^2X(\beta)^2,
\label{eq:horizon_area}
\end{equation}
but does not generally fix the entropy. For a diffeomorphism-covariant Lagrangian \(\mathcal L(g_{ab},R_{abcd},\nabla R_{abcd},\psi,\nabla\psi,\ldots)\), define \(E_R^{abcd}\equiv\delta\mathcal L/\delta R_{abcd}\). The stationary Wald entropy is \cite{Wald:1993nt,Iyer:1994ys}
\begin{equation}
S_W=-2\pi\int_{\mathcal B}E_R^{abcd}\epsilon_{ab}\epsilon_{cd}\sqrt h\,d^2x,\qquad \epsilon_{ab}\epsilon^{ab}=-2.
\label{eq:wald_entropy}
\end{equation}
Because \(E_R^{abcd}\) depends on the action, \(S_W\) cannot be reconstructed from \(\widetilde f\) alone.

For an Einstein--Hilbert curvature sector \(\mathcal L_{\rm EH}=R/[16\pi G_{\rm grav}(\beta)]\), with \(G_{\rm grav}\) constant on each stationary solution and no explicit matter--Riemann coupling,
\[
E_R^{abcd}=\frac{g^{ac}g^{bd}-g^{ad}g^{bc}}{32\pi G_{\rm grav}},\qquad E_R^{abcd}\epsilon_{ab}\epsilon_{cd}=-\frac{1}{8\pi G_{\rm grav}}.
\]
Substitution into Eq.~\eqref{eq:wald_entropy} gives
\begin{equation}
S_{\rm EH}(M,\beta)=\frac{A_H}{4G_{\rm grav}(\beta)}=M^2\Sigma(\beta),\qquad \Sigma(\beta)=\frac{\pi X(\beta)^2}{G_{\rm grav}(\beta)}.
\label{eq:einstein_entropy_map}
\end{equation}
The exact and eikonal entropy reconstructions are consequently
\begin{equation}
S_{\rm EH}^{\rm spec}(\omega_R,\tau)=M_{\rm spec}^2\Sigma[\mathcal H_{ln}^{\rm spec}(\omega_R\tau)],
\label{eq:exact_spectral_entropy_map}
\end{equation}
\begin{equation}
\widehat S_{{\rm EH},A}^{\rm eik}=\widehat M_{A,\rm eik}^{\,2}\Sigma(\widehat\beta_{\rm eik}),\qquad A\in\{R,I\}.
\label{eq:eikonal_entropy_estimators}
\end{equation}
Higher-curvature terms, nonminimal curvature couplings, or spacetime-dependent gravitational couplings require direct evaluation of Eq.~\eqref{eq:wald_entropy}; the area law in Eq.~\eqref{eq:einstein_entropy_map} cannot then be assumed.

Upon specialization to the constant-coupling, massless-vector MOG sector, the generic parameter is identified with the MOG coupling, \(\beta=\alpha\), and
\begin{equation}
G_{\rm grav}(\alpha)=G_\alpha=G_N(1+\alpha).
\label{eq:constant_mog_gravitational_coupling}
\end{equation}
This action assignment follows from the constant-coupling STVG sector used to construct the static MOG solution \cite{Moffat:2005si,Moffat:2014aja}. Appendix~\ref{app:mog_thermodynamics} derives the corresponding Wald entropy and verifies the fixed-\(\alpha\) first law and Smarr relation. Promoting \(\alpha\) to a thermodynamic variable would require an additional conjugate work term.

The maps above assign thermodynamic quantities to separately stationary members of a prescribed solution family. Varying \(\beta\) across that family does not by itself describe a quasistatic thermodynamic process or a first-law variation between solutions.

Statistical covariance and deterministic finite-\(l\) bias must be treated separately. Define \(\boldsymbol y=(\omega_R,\tau)^T\), \(\boldsymbol p=(M,\beta)^T\), and \(\boldsymbol z=(T_H,S_W)^T\). For a specified inverse,
\begin{equation}
C_p^{\rm stat}=J_{p\leftarrow y}C_yJ_{p\leftarrow y}^{T},\qquad C_z^{\rm stat}=J_{z\leftarrow p}C_p^{\rm stat}J_{z\leftarrow p}^{T}.
\label{eq:thermodynamic_covariance}
\end{equation}
The input covariance \(C_y\) must include the fitted correlation between \(\omega_R\) and \(\tau\). If an approximate inverse produces \(\boldsymbol b_p=\boldsymbol p_{\rm approx}-\boldsymbol p_{\rm spec}\), its first-order thermodynamic bias is
\begin{equation}
\boldsymbol b_z=J_{z\leftarrow p}\boldsymbol b_p.
\label{eq:thermodynamic_model_bias}
\end{equation}
A deterministic finite-\(l\) discrepancy is therefore a bias rather than a statistical covariance unless a stated stochastic model is assigned to it.

For the Einstein-like class, introduce the structural slopes
\begin{equation}
\Gamma_T(\beta)\equiv\partial_\beta\ln\Theta=\left.\frac{(\partial_x\widetilde f)(\partial_{x\beta}\widetilde f)-(\partial_\beta\widetilde f)(\partial_{xx}\widetilde f)}{(\partial_x\widetilde f)^2}\right|_{x=X(\beta)},
\label{eq:temperature_structural_slope}
\end{equation}
\begin{equation}
\Gamma_S(\beta)\equiv\partial_\beta\ln\Sigma=\left.-\frac{2\partial_\beta\widetilde f}{X\partial_x\widetilde f}\right|_{x=X(\beta)}-\partial_\beta\ln G_{\rm grav}.
\label{eq:entropy_structural_slope}
\end{equation}
The first expression follows from differentiating \(\Theta=(4\pi)^{-1}\partial_x\widetilde f[X(\beta);\beta]\) and using Eq.~\eqref{eq:implicit_horizon_derivative}; the second follows from \(\Sigma=\pi X^2/G_{\rm grav}\). Their complete linearized response is
\begin{equation}
\begin{pmatrix}\delta T_H/T_H\\ \delta S_{\rm EH}/S_{\rm EH}\end{pmatrix}=\begin{pmatrix}-1/M&\Gamma_T\\2/M&\Gamma_S\end{pmatrix}\begin{pmatrix}\delta M\\\delta\beta\end{pmatrix}.
\label{eq:thermodynamic_linear_response}
\end{equation}
This matrix generates the temperature and entropy variances, their covariance, and the corresponding deterministic biases without repeating separate component formulas. Because \(M\) and \(\beta\) are reconstructed from the same complex frequency, their covariance cannot generally be neglected. Near a parameter boundary, a poorly conditioned inverse, or a discrete degeneracy such as \(q\leftrightarrow-q\), the full likelihood must be restricted to the physical parameter domain rather than approximated by an unrestricted Gaussian.

The damping relation must also be distinguished from quantum scrambling. For the fundamental eikonal mode, \(N=1/2\), define the dimensionful photon-sphere instability exponent
\begin{equation}
\lambda_{\rm ph}=\frac{\Lambda_c}{M}.
\label{eq:dimensionful_photon_lyapunov}
\end{equation}
The eikonal spectrum gives
\begin{equation}
\omega_I^{\rm eik}=\frac{\lambda_{\rm ph}}{2},\qquad \tau_{\rm eik}=\frac{2}{\lambda_{\rm ph}}.
\label{eq:eikonal_damping_lyapunov_relation}
\end{equation}
This is a classical null-orbit instability relation \cite{Cardoso:2008bp}. The Maldacena--Shenker--Stanford bound instead constrains the Lyapunov exponent extracted from the early-time growth of an out-of-time-order correlator:
\begin{equation}
\lambda_L\leq2\pi T_H.
\label{eq:mss_bound}
\end{equation}
Its derivation assumes analyticity, causality, and factorization \cite{Maldacena:2015waa}. The test-scalar quasinormal problem contains no out-of-time-order correlator and provides no physical identification \(\lambda_{\rm ph}=\lambda_L\).

RN supplies a direct counterexample to replacing Eq.~\eqref{eq:mss_bound} by a photon-sphere inequality. With \(\delta=\sqrt{1-q^2}\), \(\zeta=\sqrt{9-8q^2}\), and \(0\leq q<1\),
\begin{equation}
M\kappa_H=\frac{\delta}{(1+\delta)^2},\qquad M\lambda_{\rm ph}=\frac{2\sqrt{\zeta(1+\zeta)}}{(3+\zeta)^2}.
\label{eq:rn_surface_gravity_lyapunov}
\end{equation}
Their near-extremal limits are
\begin{equation}
\lim_{q\to1^-}M\kappa_H=0,\qquad \lim_{q\to1^-}M\lambda_{\rm ph}=\frac{1}{4\sqrt2},\qquad \lim_{q\to1^-}\frac{\lambda_{\rm ph}}{\kappa_H}=+\infty.
\label{eq:rn_extremal_limits}
\end{equation}
Thus the inequality \(\lambda_{\rm ph}\leq\kappa_H\), which would follow from substituting \(2/\tau_{\rm eik}\) for the MSS exponent, is false even within the RN family.

The dimensionless ratio
\begin{equation}
\mathcal R_{\rm ph}\equiv\frac{\lambda_{\rm ph}}{\kappa_H}=\frac{1}{\pi T_H\tau_{\rm eik}}
\label{eq:classical_photon_sphere_diagnostic}
\end{equation}
compares only the photon-sphere instability and horizon-redshift time scales. It is not a quantum Lyapunov exponent, an information-scrambling observable, or a chaos bound. The reconstruction therefore establishes no thermodynamic--scrambling duality.

\section{Applications to Schwarzschild, Reissner--Nordstr\"om, and static MOG}
\label{sec:applications}

We apply the reconstruction framework to the RN-form family and the constant-coupling static MOG black hole. Three results must be distinguished: the leading eikonal map admits a closed inverse, the physical low-\(l\) inverse requires calibration from the complete scalar spectrum, and RN and static MOG define exactly the same neutral minimally coupled scalar boundary-value problem after an algebraic parameter map.

For Schwarzschild,
\begin{equation}
\widetilde f_{\rm Schw}(x)=1-\frac{2}{x},\qquad X_{\rm Schw}=2,\qquad x_{\rm c}^{\rm Schw}=3,\qquad \Omega_{\rm c}^{\rm Schw}=\Lambda_{\rm c}^{\rm Schw}=\frac{1}{3\sqrt3}.
\label{eq:schwarzschild_reference_data}
\end{equation}
Hence \(\chi_{Ln}^{\rm Schw,eik}=L/N\), and the fundamental mode satisfies
\begin{equation}
\chi_{L0}^{\rm Schw,eik}=2L,\qquad \widehat M_{R,\rm eik}^{\rm Schw}=\frac{L}{3\sqrt3\,\omega_R},\qquad \widehat M_{I,\rm eik}^{\rm Schw}=\frac{N\tau}{3\sqrt3}.
\label{eq:schwarzschild_eikonal_reconstruction}
\end{equation}
The two mass estimates coincide only when the finite-\(l\) frequency satisfies the leading eikonal relation. For the Einstein--Hilbert action,
\begin{equation}
\kappa_H^{\rm Schw}=\frac{1}{4M},\qquad T_H^{\rm Schw}=\frac{1}{8\pi M},\qquad S_{\rm Schw}=\frac{4\pi M^2}{G_N}.
\label{eq:schwarzschild_thermodynamics}
\end{equation}

Consider next the RN-form lapse
\begin{equation}
\widetilde f_{\rm RN}(x;u)=1-\frac{2}{x}+\frac{u}{x^2},\qquad 0\leq u<1.
\label{eq:rn_dimensionless_lapse}
\end{equation}
For Einstein--Maxwell RN, \(u=q^2\), where \(q=Q_{\rm geom}/M\). A neutral scalar therefore determines \(\lvert q\rvert\), not the sign of \(q\).

Introduce
\begin{equation}
\delta=\sqrt{1-u},\qquad \zeta=\sqrt{9-8u},\qquad x_\pm=1\pm\delta,\qquad x_{\rm c}=\frac{3+\zeta}{2}.
\label{eq:rn_geometric_quantities}
\end{equation}
The horizon roots follow from \(x^2-2x+u=0\), while the photon-sphere roots follow from \(x^2-3x+2u=0\). The minus photon-sphere root lies at or inside \(x=1\), whereas \(x_{\rm c}>x_+\) throughout \(0\leq u<1\). The exterior orbit is unstable because
\begin{equation}
\widetilde f_{\rm c}=\frac{1+\zeta}{2(3+\zeta)},\qquad U''(x_{\rm c})=-\frac{64\zeta}{(3+\zeta)^5}<0.
\label{eq:rn_photon_sphere_properties}
\end{equation}

The orbital and Lyapunov functions are
\begin{equation}
\Omega_{\rm c}(u)=\sqrt{\frac{2(1+\zeta)}{(3+\zeta)^3}},\qquad \Lambda_{\rm c}(u)=\frac{2\sqrt{\zeta(1+\zeta)}}{(3+\zeta)^2}.
\label{eq:rn_photon_sphere_functions}
\end{equation}
They reduce to \(1/(3\sqrt3)\) at \(u=0\). The leading fixed-overtone spectrum is therefore
\begin{equation}
M\omega_{Ln}^{\rm eik}=L\sqrt{\frac{2(1+\zeta)}{(3+\zeta)^3}}-iN\frac{2\sqrt{\zeta(1+\zeta)}}{(3+\zeta)^2}+O(L^{-1}),
\label{eq:rn_eikonal_frequency}
\end{equation}
consistent with the null-orbit correspondence \cite{Cardoso:2008bp}.

For arbitrary fixed \(n\), the scale-free coordinate is
\begin{equation}
\chi_{Ln}^{\rm RN,eik}(u)=\frac{L}{\sqrt2\,N}\sqrt{\frac{3+\zeta}{\zeta}}.
\label{eq:rn_general_eikonal_quality}
\end{equation}
For the fundamental mode, \(N=1/2\), this becomes
\begin{equation}
\chi_{L0}^{\rm RN,eik}=L\sqrt{2+\frac{6}{\zeta}},\qquad \frac{d\chi_{L0}^{\rm RN,eik}}{du}=\frac{12L^2}{\zeta^3\chi_{L0}^{\rm RN,eik}}>0.
\label{eq:rn_quality_monotonicity}
\end{equation}
Thus the fundamental eikonal map is globally injective on the complete nonextremal interval.

\begin{theorem}
\label{thm:closed_rn_eikonal_inverse}
For the fundamental branch,
\begin{equation}
2L\leq\chi<2\sqrt2\,L,\qquad \zeta(\chi)=\frac{6L^2}{\chi^2-2L^2},\qquad u_{\rm eik}(\chi)=\frac{9\chi^2(\chi^2-4L^2)}{8(\chi^2-2L^2)^2}.
\label{eq:rn_u_inverse}
\end{equation}
\end{theorem}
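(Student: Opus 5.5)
The plan is to treat Eq.~\eqref{eq:rn_quality_monotonicity} as the forward map and invert it in two stages: first from $\chi$ to the intermediate variable $\zeta$, then from $\zeta$ to $u$. Along the way we determine the image interval from the range of $\zeta$.

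\emph{Step 1: the image.} On the nonextremal domain $0\le u<1$, the function $\zeta=\sqrt{9-8u}$ decreases strictly from $\zeta=3$ at $u=0$ toward $\zeta\to1^{+}$ as $u\to1^{-}$. So $\zeta\in(1,3]$. Since $\chi_{L0}^{\rm RN,eik}=L\sqrt{2+6/\zeta}$ is strictly decreasing in $\zeta$, it is strictly increasing in $u$, in agreement with the positive derivative in Eq.~\eqref{eq:rn_quality_monotonicity}. At $\zeta=3$ we get $\chi=L\sqrt4=2L$, attained at the Schwarzschild endpoint, consistent with Eq.~\eqref{eq:schwarzschild_eikonal_reconstruction}. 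As $\zeta\to1^{+}$ we get $\chi\to L\sqrt8=2\sqrt2\,L$, which is not attained. Continuity and strict monotonicity then give the image $[2L,2\sqrt2\,L)$. By Theorem~\ref{thm:eikonal_inverse_estimator}, the inverse is unique on this image.

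\emph{Step 2: solve for $\zeta$.} Square the relation to get $\chi^2/L^2=2+6/\zeta$, so $\chi^2-2L^2=6L^2/\zeta$. On the image, $\chi^2-2L^2\ge 2L^2>0$, so we may divide and obtain $\zeta=6L^2/(\chi^2-2L^2)$. Squaring introduces no spurious root, because $\chi>0$ and $\zeta>0$ on the branch.

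\emph{Step 3: solve for $u$.} From $\zeta^2=9-8u$ we have $u=(9-\zeta^2)/8$. Substituting Step 2 gives
\[
u=\frac{9(\chi^2-2L^2)^2-36L^4}{8(\chi^2-2L^2)^2}.
\]
The numerator is $9\bigl[(\chi^2-2L^2)^2-4L^4\bigr]$, a difference of squares, which equals $9(\chi^2-4L^2)\chi^2$. This reproduces Eq.~\eqref{eq:rn_u_inverse}. As a consistency check, $u=0$ at $\chi=2L$ and $u\to 1$ as $\chi\to2\sqrt2 L$.

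There is no real obstacle here; the computation is elementary. The only point needing care is Step 1: the endpoint inclusion ($2L$ included, $2\sqrt2L$ excluded) must be tied correctly to the half-open domain $0\le u<1$, and the sign of $\chi^2-2L^2$ must be confirmed before dividing.
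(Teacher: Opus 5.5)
Your proposal is correct and follows essentially the same route as the paper: square the fundamental-mode relation $\chi=L\sqrt{2+6/\zeta}$ to isolate $\zeta$, substitute into $\zeta^2=9-8u$, and fix the half-open image from the Schwarzschild ($\zeta=3$, $\chi=2L$) and extremal ($\zeta\to1^{+}$, $\chi\to2\sqrt2\,L$) endpoints. Your continuity-plus-strict-monotonicity argument for the image, the sign check $\chi^2-2L^2\geq2L^2>0$ before dividing, and the difference-of-squares factorization make explicit what the paper's short proof leaves implicit.
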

\textbf{Proof:} Squaring Eq.~\eqref{eq:rn_quality_monotonicity} without its derivative gives \(\chi^2/L^2=2+6/\zeta\), from which the expression for \(\zeta(\chi)\) follows. Substitution into \(\zeta^2=9-8u\) gives the stated inverse. At \(u=0\), \(\zeta=3\) and \(\chi=2L\); as \(u\to1^{-}\), \(\zeta\to1\) and \(\chi\to2\sqrt2\,L\). The upper endpoint is excluded because the horizon becomes extremal.

The domain in Eq.~\eqref{eq:rn_u_inverse} is part of the inverse. A finite-\(l\) value with \(\chi<2L\) produces \(u_{\rm eik}<0\), even when the forward eikonal error is small. Low-multipole inference must therefore use the calibrated spectral inverse of Sec.~\ref{sec:numerical_validation}.

Once \(\zeta\) is reconstructed, the two leading scale estimators are
\begin{equation}
\widehat M_{R,\rm eik}=\frac{L}{\omega_R}\sqrt{\frac{2(1+\zeta)}{(3+\zeta)^3}},\qquad \widehat M_{I,\rm eik}=\frac{2N\tau\sqrt{\zeta(1+\zeta)}}{(3+\zeta)^2}.
\label{eq:rn_mass_estimators}
\end{equation}

At the outer horizon \(x_+=1+\delta\), the RN surface gravity, temperature, and Einstein--Maxwell entropy are
\begin{equation}
M\kappa_H^{\rm RN}=\frac{\delta}{(1+\delta)^2},\qquad T_H^{\rm RN}=\frac{\delta}{2\pi M(1+\delta)^2},\qquad S_{\rm RN}=\frac{\pi M^2(1+\delta)^2}{G_N}.
\label{eq:rn_thermodynamics}
\end{equation}
The relation between the two auxiliary variables is
\[
\delta=\sqrt{1-u}=\frac{\sqrt{\zeta^2-1}}{2\sqrt2}.
\]
Equations~\eqref{eq:rn_u_inverse}, \eqref{eq:rn_mass_estimators}, and \eqref{eq:rn_thermodynamics} provide a closed leading eikonal map from \((\omega_R,\tau)\) to the RN temperature and entropy.

The structural slopes used in the covariance and bias propagation of Sec.~\ref{sec:thermodynamic_reconstruction} are
\begin{equation}
\beta_T^{\rm RN}=-\frac{1-\delta}{2\delta^2(1+\delta)},\qquad \beta_S^{\rm RN}=-\frac{1}{\delta(1+\delta)}.
\label{eq:rn_thermodynamic_slopes}
\end{equation}
Both become singular as \(u\to1^{-}\), showing that the thermodynamic inverse becomes poorly conditioned near extremality.

The constant-coupling, massless-vector static MOG lapse is \cite{Moffat:2005si,Moffat:2014aja}
\begin{equation}
f_{\rm MOG}(r;m,\alpha)=1-\frac{2(1+\alpha)m}{r}+\frac{\alpha(1+\alpha)m^2}{r^2},\qquad \alpha\geq0,
\label{eq:static_mog_lapse}
\end{equation}
where \(m=G_NM_{\rm ADM}\). Define
\begin{equation}
M=(1+\alpha)m,\qquad u=\frac{\alpha}{1+\alpha},\qquad \alpha=\frac{u}{1-u},\qquad m=(1-u)M.
\label{eq:mog_rn_parameter_map}
\end{equation}
Then
\begin{equation}
f_{\rm MOG}(r;m,\alpha)=1-\frac{2M}{r}+u\frac{M^2}{r^2}=f_{\rm RN}(r;M,u).
\label{eq:exact_mog_rn_lapse_identity}
\end{equation}
The map is bijective between finite \(\alpha\in[0,\infty)\) and \(u\in[0,1)\). Combining it with Eq.~\eqref{eq:rn_u_inverse} gives
\begin{equation}
\alpha_{\rm eik}(\chi)=\frac{9\chi^2(\chi^2-4L^2)}{(8L^2-\chi^2)(\chi^2+4L^2)},\qquad 2L\leq\chi<2\sqrt2\,L.
\label{eq:mog_alpha_eikonal_inverse}
\end{equation}
This satisfies \(\alpha_{\rm eik}(2L)=0\) and diverges as \(\chi\to2\sqrt2\,L^{-}\).

\begin{proposition}
\label{prop:exact_rn_mog_scalar_isospectrality}
For RN parameters \((M,u)\) and static MOG parameters \((m,\alpha)\) related by Eq.~\eqref{eq:mog_rn_parameter_map},
\begin{equation}
\widehat\omega_{ln}^{\rm MOG}(\alpha)=\widehat\omega_{ln}^{\rm RN}\!\left(\frac{\alpha}{1+\alpha}\right)
\label{eq:exact_rn_mog_dimensionless_spectrum}
\end{equation}
for every neutral minimally coupled scalar mode \((l,n)\).
\end{proposition}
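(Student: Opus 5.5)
The plan is to show that the two boundary-value problems are literally the same once the MOG data are expressed in the RN normalization. No spectral analysis is needed.

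\textbf{Step 1: identify the lapses.} Start from the lapse identity Eq.~\eqref{eq:exact_mog_rn_lapse_identity}. It holds pointwise in \(r\) for every \(r>0\), so the two static metrics Eq.~\eqref{eq:static_spherical_metric} coincide as tensors on the common chart \((t,r,\theta,\phi)\). Here \(t\) is the same Killing time in both descriptions, because both lapses tend to \(1\) at infinity. Write the MOG family in the homogeneous form Eq.~\eqref{eq:dimensionless_coordinates} with geometric length \(M=(1+\alpha)m\). Then \(\widetilde f_{\rm MOG}(x;\alpha)=\widetilde f_{\rm RN}(x;\alpha/(1+\alpha))\) with \(x=r/M\). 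This fixes the dimensionless frequency convention: \(\widehat\omega^{\rm MOG}\) is defined by \(M\omega\) with this same \(M\). If the paper intends \(m\omega\) instead, the statement acquires the explicit factor \((1+\alpha)^{-1}\), and I would note this rescaling explicitly.

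\textbf{Step 2: identify the radial problems.} The radial operator in Eq.~\eqref{eq:dimensionless_scalar_equation} depends on the geometry only through \(\widetilde f\) and \(\partial_x\widetilde f\). The tortoise coordinate \(x_*\) is obtained by integrating \(1/\widetilde f\), so it agrees up to an additive constant. That constant only multiplies the asymptotic solutions by constants and does not affect the conditions Eq.~\eqref{eq:qnm_boundary_conditions}. Hence \(V_l^{\rm MOG}(x;\alpha)=V_l^{\rm RN}(x;u)\) identically in \(x\), for every \(l\).

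Admissibility (Definition~\ref{def:admissible_family}) transfers as well. The outer horizon is \(x_+=1+\sqrt{1-u}\), which is simple for \(u<1\), that is, for finite \(\alpha\). The exterior region and the photon sphere are therefore the same sets.

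\textbf{Step 3: conclude equality of spectra.} The two problems now consist of the same ODE on the same interval \(x_*\in\mathbb R\) with the same ingoing and outgoing conditions. They therefore have the same set of eigenvalues \(\widehat\omega\), counted with multiplicity. It remains to match labels \((l,n)\). The angular number \(l\) is the common spherical-harmonic index. The overtone label \(n\) orders eigenvalues by increasing \(\mathcal B\) within each \(l\). Since the ordered spectra are identical, the \(n\)-th elements agree, which gives Eq.~\eqref{eq:exact_rn_mog_dimensionless_spectrum}.

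If branches are instead labeled by continuation in the parameter, as in Definition~\ref{def:spectrally_regular_branch}, the argument changes slightly. The map \(\alpha\mapsto u\) is a continuous increasing bijection of \([0,\infty)\) onto \([0,1)\), and both families agree at \(\alpha=u=0\) (Schwarzschild). Continuously identified branches therefore correspond.

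\textbf{Main obstacle.} The only delicate point is not analytic but concerns conventions. I must make sure the frequency normalization uses \(M\) and the same asymptotic time in both descriptions, and that the mode labeling is parameter-independent. I would handle this by stating explicitly that \(\widehat\omega=M\omega\) and that \(n\) is assigned by damping order or by continuation from Schwarzschild; under either convention the identification follows from the equality of the ordered spectra.
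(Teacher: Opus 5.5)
Your proposal is correct and follows the same route as the paper. The lapse identity gives the same normalized time, tortoise coordinate (up to an additive constant that only shifts asymptotic phases), potential, and quasinormal boundary conditions, so the complete scalar boundary-value problems and their spectra coincide. Your remarks on fixing $\widehat\omega=M\omega$ with $M=(1+\alpha)m$ and on matching overtone labels agree with the paper's claim of equal multiplicities and labels, and are slightly more explicit.
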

\textbf{Proof:} Equation~\eqref{eq:exact_mog_rn_lapse_identity} gives the same areal coordinate, asymptotically normalized time, horizon, and lapse. Hence
\begin{equation}
\frac{dr_*^{\rm MOG}}{dr}=\frac{dr_*^{\rm RN}}{dr}=\frac{1}{f},\qquad V_l^{\rm MOG}=V_l^{\rm RN}=f\left[\frac{l(l+1)}{r^2}+\frac{f_{,r}}{r}\right].
\label{eq:rn_mog_scalar_operator_identity}
\end{equation}
The two tortoise coordinates differ at most by an additive constant, which changes only the constant phases of the asymptotic amplitudes. The radial operators and the ingoing and outgoing quasinormal boundary conditions are therefore identical, so their complete neutral scalar spectra coincide.

This is an exact all-\((l,n)\) identity, not only an eikonal degeneracy. Additional neutral scalar overtones or multipoles cannot distinguish RN from the mapped static MOG geometry. Charged probes, nonminimal couplings, and gravitational or vector perturbations must instead be derived from their theory-specific field equations.

Because the normalized geometries are identical, the horizon radius, surface gravity, and temperature are also identical:
\begin{equation}
T_H^{\rm MOG}(M,u)=T_H^{\rm RN}(M,u)=\frac{\delta}{2\pi M(1+\delta)^2}.
\label{eq:rn_mog_temperature_identity}
\end{equation}
The entropy differs because Einstein--Maxwell theory uses \(G_N\), whereas the constant-coupling MOG curvature sector uses
\begin{equation}
G_\alpha=G_N(1+\alpha)=\frac{G_N}{1-u}.
\label{eq:mapped_mog_gravitational_coupling}
\end{equation}
The MOG Wald entropy is therefore
\begin{equation}
S_{\rm MOG}=\frac{A_H}{4G_\alpha}=(1-u)S_{\rm RN}=\frac{\pi M^2(1-u)(1+\delta)^2}{G_N}.
\label{eq:mog_entropy_relative_to_rn}
\end{equation}
Its structural slope is
\begin{equation}
\beta_S^{\rm MOG}=-\frac{1+2\delta}{\delta^2(1+\delta)},
\label{eq:mog_entropy_slope}
\end{equation}
while the temperature slope remains \(\beta_T^{\rm RN}\) because the normalized geometry is unchanged. Appendix~\ref{app:mog_thermodynamics} derives this entropy directly from the action and verifies the fixed-coupling first law and Smarr relation.

The neutral scalar spectrum therefore reconstructs the common RN-form geometry only after a model family has been selected. It fixes the common temperature but cannot determine whether the entropy is \(S_{\rm RN}\) or \(S_{\rm MOG}\). That choice requires the gravitational action. The reconstructed scale \(M\) also has different interpretations: in static MOG, the underlying geometric mass is \(m=(1-u)M\). Neither distinction can be resolved by adding further neutral minimally coupled scalar modes.


\section{Finite-multipole validation and calibrated inversion}
\label{sec:numerical_validation}

The photon-sphere formulas of Secs.~\ref{sec:scalar_framework} and
\ref{sec:applications} are fixed-overtone expansions in
\(L=l+1/2\) \cite{Cardoso:2008bp,Dolan:2009nk}. We therefore solve the
complete neutral massless-scalar boundary-value problem at finite \(l\).
For the fundamental branch,
\begin{equation}
\widehat\omega_{l0}^{\rm spec}(u)=\mathcal A_{l0}(u)-i\mathcal B_{l0}(u),\qquad \mathcal B_{l0}>0,\qquad \chi_{l0}^{\rm spec}(u)=\frac{\mathcal A_{l0}(u)}{\mathcal B_{l0}(u)}.
\label{eq:finite_l_spectral_functions}
\end{equation}
Here \(\widehat\omega=M\omega\) and \(u=q^2\). The Schwarzschild benchmark
uses \(l=2,3,4,5,10\), while the displayed RN grid is
\(q=\{0,0.2,0.4,0.6,0.8,0.9,0.95,0.99\}\).

The exact horizon and asymptotic phases are removed before the regular
radial remainder is represented on Chebyshev--Gauss points. Collocation
gives the quadratic matrix pencil
\begin{equation}
\left(K_0+\widehat\omega K_1+\widehat\omega^{\,2}K_2\right)\boldsymbol a=0,
\label{eq:quadratic_qnm_pencil}
\end{equation}
which is linearized as
\begin{equation}
\begin{pmatrix}0&I\\-K_0&-K_1\end{pmatrix}
\begin{pmatrix}\boldsymbol a\\\widehat\omega\boldsymbol a\end{pmatrix}
=
\widehat\omega
\begin{pmatrix}I&0\\0&K_2\end{pmatrix}
\begin{pmatrix}\boldsymbol a\\\widehat\omega\boldsymbol a\end{pmatrix}.
\label{eq:linearized_qnm_pencil}
\end{equation}
This generalized spectral formulation imposes the quasinormal endpoint
conditions analytically before discretization \cite{Jansen:2017oag}.
Appendix~\ref{app:numerical_implementation} gives the compact coordinate,
endpoint factors, and differentiation matrices.

The production calculation uses \(N_c=40\) Chebyshev--Gauss points and IEEE
complex double precision. The generalized eigenvalues are computed with the
dense routine \texttt{scipy.linalg.eigvals}. Nonfinite roots and candidates
with
\begin{equation}
\Re\widehat\omega\leq0,\qquad \Im\widehat\omega\geq0,\qquad \left|\widehat\omega\right|\geq5
\label{eq:spectral_candidate_filter}
\end{equation}
are discarded. At \(q=0\), the fundamental branch is initialized with the
positive-frequency damped root nearest the independent WKB--Pad\'e result.
For increasing charge, the selected root minimizes
\begin{equation}
d_j=\left|\widehat\omega_j(u_{k+1})-\widehat\omega_{\rm selected}(u_k)\right|
\label{eq:spectral_continuation_distance}
\end{equation}
among the physical candidates. The \(N_c=32\) and \(N_c=36\) spectra are
matched independently to the \(N_c=40\) root by the same nearest-frequency
criterion.

No arbitrary fixed matching-distance cutoff is imposed. Branch acceptance
instead requires consistent nearest-root identification at all three
resolutions, smooth continuation in \(u\), and agreement with the
independent WKB--Pad\'e branch. The production routine computes generalized
eigenvalues rather than eigenvectors, so no separate quadratic-pencil
residual cutoff is imposed. The quoted numerical accuracy is established
from cross-resolution convergence and the independent WKB comparison.

Over the displayed \((l,q)\) grid,
\begin{equation}
\max\left|\widehat\omega_{40}-\widehat\omega_{36}\right|=7.4\times10^{-12}.
\label{eq:spectral_resolution_bound}
\end{equation}
The \(N_c=40\) values lie on the observed double-precision plateau, so the
frequencies are reported to nine decimal places rather than to the full
internal precision.

The independent calculation evaluates the exact finite-\(l\) potential at
its exterior maximum. The maximum is located by Brent bracketing with
absolute and relative tolerances \(10^{-14}\). Tortoise-coordinate
derivatives are generated analytically through Taylor jets, the barrier-top
series is evaluated through sixth WKB order, and the resulting series for
\(\widehat\omega^2\) is resummed with the diagonal \(P_3^3\) Pad\'e
approximant. The neighboring \(P_2^4\) and \(P_4^2\) approximants provide
stability checks \cite{Iyer:1986np,Konoplya:2003ii,Matyjasek:2017psv}.
The oscillator representation uses 50 basis states; changing the basis
between 32 and 60 states leaves the displayed WKB frequencies unchanged to
15 decimal places. Neither numerical method uses the leading photon-sphere
frequency as input.

For \(X\in\{\mathcal A,\mathcal B,\chi\}\), define the absolute relative
eikonal error by
\begin{equation}
\varepsilon_X=\left|\frac{X^{\rm spec}-X^{\rm eik}}{X^{\rm spec}}\right|.
\label{eq:absolute_eikonal_errors}
\end{equation}
The ratio error can be much smaller than the separate errors in its
components. Define the signed residuals
\begin{equation}
\delta_R=\frac{\mathcal A^{\rm spec}-\mathcal A^{\rm eik}}{\mathcal A^{\rm spec}},\qquad \delta_I=\frac{\mathcal B^{\rm spec}-\mathcal B^{\rm eik}}{\mathcal B^{\rm spec}}.
\label{eq:signed_frequency_residuals}
\end{equation}
Since
\(\mathcal A^{\rm eik}=\mathcal A^{\rm spec}(1-\delta_R)\) and
\(\mathcal B^{\rm eik}=\mathcal B^{\rm spec}(1-\delta_I)\), one obtains
\begin{equation}
\frac{\chi^{\rm spec}-\chi^{\rm eik}}{\chi^{\rm spec}}=\frac{\delta_R-\delta_I}{1-\delta_I}.
\label{eq:exact_ratio_error_identity}
\end{equation}
Thus correlated real- and imaginary-frequency errors can cancel in
\(\chi\). A small \(\varepsilon_\chi\) does not imply that
\(\varepsilon_R\) and \(\varepsilon_I\) are separately small.

\begin{table*}[t]
\caption{Schwarzschild fundamental scalar frequencies and leading-eikonal errors. The damping rate is \(\mathcal B_{l0}=-\Im\widehat\omega_{l0}>0\).}
\label{tab:schwarzschild_scalar_benchmark}
\centering
\small
\setlength{\tabcolsep}{5pt}
\begin{tabular}{c c c c c c c}
\hline\hline
\(l\) & \(\mathcal A_{l0}^{\rm spec}\) & \(\mathcal B_{l0}^{\rm spec}\) & \(\chi_{l0}^{\rm spec}\) & \(\varepsilon_R\) & \(\varepsilon_I\) & \(\varepsilon_\chi\)\\
\hline
2  & 0.483643872 & 0.096758776 & 4.998450  & \(5.208\times10^{-3}\) & \(5.516\times10^{-3}\) & \(3.102\times10^{-4}\)\\
3  & 0.675366233 & 0.096499628 & 6.998641  & \(2.652\times10^{-3}\) & \(2.845\times10^{-3}\) & \(1.942\times10^{-4}\)\\
4  & 0.867415642 & 0.096391692 & 8.998863  & \(1.603\times10^{-3}\) & \(1.729\times10^{-3}\) & \(1.263\times10^{-4}\)\\
5  & 1.059611821 & 0.096336781 & 10.999037 & \(1.072\times10^{-3}\) & \(1.160\times10^{-3}\) & \(8.755\times10^{-5}\)\\
10 & 2.021320267 & 0.096255773 & 20.999471 & \(2.940\times10^{-4}\) & \(3.192\times10^{-4}\) & \(2.521\times10^{-5}\)\\
\hline\hline
\end{tabular}
\end{table*}

Table~\ref{tab:schwarzschild_scalar_benchmark} shows the expected
convergence with \(l\). At \(l=2\), the separate frequency errors are
approximately \(5\times10^{-3}\), but their correlated contribution to
\(\chi\) is only \(3.1\times10^{-4}\). Across the displayed RN grid, the
individual \(l=2\) errors remain below \(5.7\times10^{-3}\), with
\begin{equation}
\max_{\rm grid}\varepsilon_R=5.636\times10^{-3},\qquad \max_{\rm grid}\varepsilon_I=5.516\times10^{-3},\qquad \max_{\rm grid}\varepsilon_\chi=2.072\times10^{-3}\quad(q=0.95).
\label{eq:maximum_forward_errors}
\end{equation}
The smaller ratio error at \(q=0.99\) results from
Eq.~\eqref{eq:exact_ratio_error_identity} and does not establish uniform
near-extremal accuracy.

The analytic fundamental inverse of
Theorem~\ref{thm:closed_rn_eikonal_inverse} is
\begin{equation}
\widehat u_{\rm eik}(\chi)=\frac{9\chi^2(\chi^2-4L^2)}{8(\chi^2-2L^2)^2}.
\label{eq:direct_eikonal_u_inverse}
\end{equation}
Substitution of the forward eikonal map returns \(u\) identically. For a
finite-\(l\) spectrum, define the deterministic inverse bias
\begin{equation}
\Delta u_{\rm eik}=\widehat u_{\rm eik}\!\left(\chi_{l0}^{\rm spec}\right)-u.
\label{eq:eikonal_inverse_bias_definition}
\end{equation}

\begin{table*}[t]
\caption{Forward ratio error and inverse bias for the \(l=2\) fundamental RN scalar mode. The Schwarzschild continuation lies outside the physical eikonal image.}
\label{tab:rn_l2_inverse_bias}
\centering
\small
\setlength{\tabcolsep}{6pt}
\begin{tabular}{c c c c c c}
\hline\hline
\(q\) & \(u=q^2\) & \(\chi_{20}^{\rm spec}\) & \(\varepsilon_\chi\) & \(\widehat u_{\rm eik}\) & \(\Delta u_{\rm eik}\)\\
\hline
0.00 & 0.0000 & 4.998450 & \(3.102\times10^{-4}\) & \(-0.002795\) & \(-0.002795\)\\
0.20 & 0.0400 & 5.021642 & \(2.268\times10^{-4}\) & 0.038047 & \(-0.001953\)\\
0.40 & 0.1600 & 5.098885 & \(4.002\times10^{-5}\) & 0.160298 & \(+0.000298\)\\
0.60 & 0.3600 & 5.262034 & \(5.514\times10^{-4}\) & 0.363072 & \(+0.003072\)\\
0.80 & 0.6400 & 5.624029 & \(1.455\times10^{-3}\) & 0.644652 & \(+0.004652\)\\
0.90 & 0.8100 & 6.022644 & \(2.062\times10^{-3}\) & 0.813948 & \(+0.003948\)\\
0.95 & 0.9025 & 6.385627 & \(2.072\times10^{-3}\) & 0.905148 & \(+0.002648\)\\
0.99 & 0.9801 & 6.880937 & \(1.838\times10^{-4}\) & 0.980245 & \(+0.000145\)\\
\hline\hline
\end{tabular}
\end{table*}

For Schwarzschild,
\begin{equation}
\chi_{20}^{\rm spec}(0)=4.998449674<2L=5,\qquad \widehat u_{\rm eik}=-2.79535\times10^{-3}.
\label{eq:schwarzschild_outside_eikonal_image}
\end{equation}
The negative value is not a small physical charge squared. It indicates
that the finite-\(l\) datum lies outside the physical image of the analytic
eikonal model and must therefore be rejected as an RN reconstruction. On
the displayed grid,
\begin{equation}
\max_{\rm grid}\left|\Delta u_{\rm eik}\right|=4.652\times10^{-3}\qquad(q=0.8).
\label{eq:maximum_eikonal_inverse_bias}
\end{equation}
For the physical inversions with \(q\geq0.2\), the maximum
real-frequency mass bias is \(5.57\times10^{-3}\). Propagation through
Eq.~\eqref{eq:thermodynamic_linear_response} gives maximum displayed
temperature and entropy biases of approximately \(5.61\times10^{-3}\) and
\(1.44\times10^{-2}\), respectively. These quantities are deterministic
model biases rather than statistical variances.

The finite-\(l\) reconstruction instead uses the numerical spectral map.
The \(l=2\) fundamental branch was followed on
\(u_j=0.9801j/120\), \(j=0,\ldots,120\). On this dense grid,
\begin{equation}
\max_j\left|\widehat\omega_{40}(u_j)-\widehat\omega_{36}(u_j)\right|=2.73\times10^{-12},\qquad \min_j\frac{\chi_{20}^{\rm spec}(u_{j+1})-\chi_{20}^{\rm spec}(u_j)}{u_{j+1}-u_j}=5.68\times10^{-1}>0.
\label{eq:dense_grid_validation}
\end{equation}
A monotonicity-preserving piecewise-cubic Hermite interpolant
\(\mathcal I_\chi(u)\) satisfies
\(\min\mathcal I_\chi'(u)=5.66\times10^{-1}>0\) on the validated interval
\cite{Fritsch:1980ubs}. The calibrated inverse is therefore
\begin{equation}
\widehat u_{\rm cal}=\mathcal H_{20}^{\rm cal}(\chi)\equiv\mathcal I_\chi^{-1}(\chi),\qquad 4.998449674\leq\chi\leq6.880937107.
\label{eq:calibrated_finite_l_inverse}
\end{equation}
This image corresponds only to \(0\leq u\leq0.9801\); no extrapolation to
extremality is used. If \(\mathcal I_A\) and \(\mathcal I_B\) interpolate
the real and damping components, the calibrated mass estimators are
\begin{equation}
\widehat M_{R,\rm cal}=\frac{\mathcal I_A(\widehat u_{\rm cal})}{\omega_R},\qquad \widehat M_{I,\rm cal}=\mathcal I_B(\widehat u_{\rm cal})\tau.
\label{eq:calibrated_mass_estimators}
\end{equation}
Halving the calibration grid changes the recovered \(u\) by at most
\(9.0\times10^{-6}\), well below the eikonal inverse biases in
Table~\ref{tab:rn_l2_inverse_bias}.

\begin{table}[t]
\caption{Independent sixth-order \(P_3^3\) WKB--Pad\'e comparison for the \(l=2\) fundamental mode.}
\label{tab:spectral_wkb_comparison}
\centering
\small
\setlength{\tabcolsep}{3pt}
\begin{tabular}{c c c c}
\hline\hline
\(q\) & \(\widehat\omega^{\rm spec}\) & \(\widehat\omega^{\rm WKB}\) & \(|\Delta\widehat\omega|\)\\
\hline
0.00 & \(0.483643872-0.096758776i\) & \(0.483643231-0.096758976i\) & \(6.716\times10^{-7}\)\\
0.60 & \(0.517387632-0.098324639i\) & \(0.517387178-0.098324854i\) & \(5.029\times10^{-7}\)\\
0.90 & \(0.581951873-0.096627312i\) & \(0.581952490-0.096627287i\) & \(6.176\times10^{-7}\)\\
0.99 & \(0.621026121-0.090253131i\) & \(0.621027209-0.090252811i\) & \(1.134\times10^{-6}\)\\
\hline\hline
\end{tabular}
\end{table}

The spectral and WKB--Pad\'e results agree at approximately \(10^{-6}\)
for the displayed \(l=2\) modes and better than \(1.3\times10^{-10}\) for
the tested \(l=10\) modes. The resulting hierarchy is
\begin{equation}
\text{spectral truncation}\ll\text{WKB--spectral difference}\ll\text{leading-eikonal model error}.
\label{eq:numerical_error_hierarchy}
\end{equation}
The first two terms estimate numerical uncertainty. The final term is a
deterministic approximation bias and must be propagated through
Eqs.~\eqref{eq:thermodynamic_model_bias} and
\eqref{eq:thermodynamic_linear_response}, not automatically added to the
statistical covariance.

\begin{figure*}[t]
\centering
\includegraphics[width=0.32\textwidth]{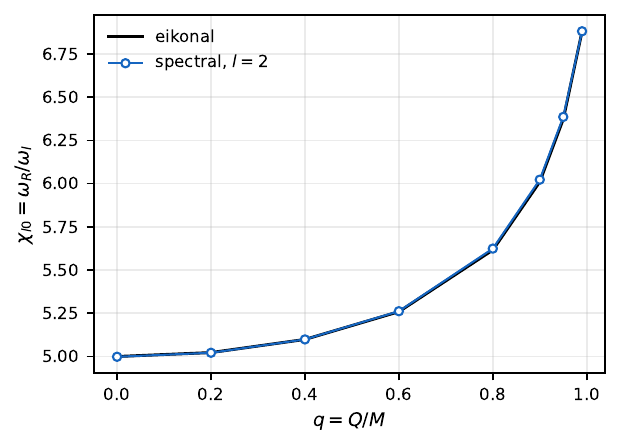}
\includegraphics[width=0.32\textwidth]{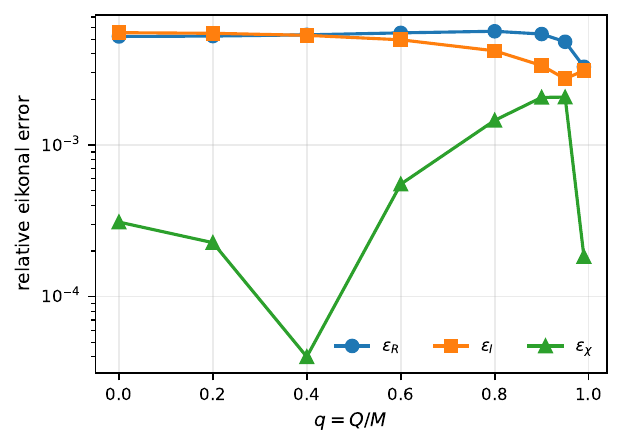}
\includegraphics[width=0.32\textwidth]{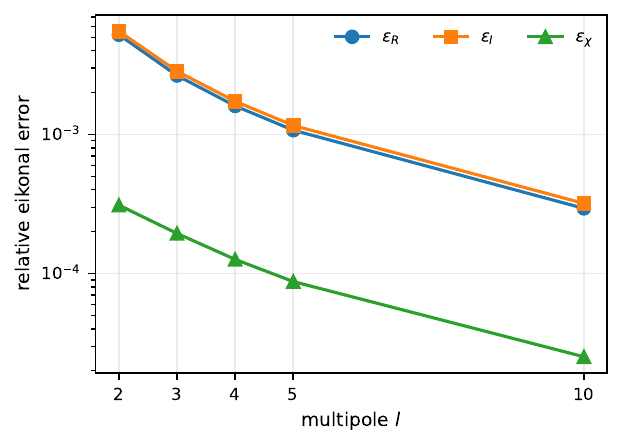}
\caption{Finite-multipole validation. Left: spectral and eikonal \(\chi_{20}\) on the RN grid. Center: \(l=2\) absolute relative errors. Right: Schwarzschild errors versus \(l\). The small ratio error partly reflects the cancellation in Eq.~\eqref{eq:exact_ratio_error_identity}.}
\label{fig:finite_l_validation}
\end{figure*}

No separate MOG scalar calculation is required. Proposition~\ref{prop:exact_rn_mog_scalar_isospectrality} maps the complete RN spectrum and calibrated inverse to static MOG through \(u=\alpha/(1+\alpha)\). A separate RN--MOG overlap plot would therefore illustrate an exact operator identity rather than an independent numerical test.

\section{Conclusion and outlook}
\label{sec:conclusion}

We developed a model-conditional inverse for a neutral, minimally coupled, massless scalar field on static, spherically symmetric, asymptotically flat black-hole backgrounds with homogeneous scale separation. For a prescribed family and a fixed mode branch, the dimensionless ratio
\(\chi=\omega_R\tau=\mathcal A_{ln}(\beta)/\mathcal B_{ln}(\beta)\)
removes the overall geometric scale. If the finite-\(l\) map
\(\chi_{ln}^{\rm spec}(\beta)\) is injective on the selected domain, then
\(\chi\) determines the dimensionless parameter \(\beta\), while either
frequency component determines \(M\). This reconstruction assumes the metric
family, perturbing field, boundary conditions, mode labels, and spectral
branch; it does not identify them from one complex frequency.

The leading eikonal spectrum,
\begin{equation}
M\omega_{Ln}=L\Omega_{\rm c}(\beta)-iN\Lambda_{\rm c}(\beta)+O(L^{-1}),
\label{eq:conclusion_eikonal_structure}
\end{equation}
therefore defines an analytic estimator rather than an exact low-\(l\)
inverse. For the RN-form family, the fundamental eikonal map is globally
injective and gives
\begin{equation}
u_{\rm eik}(\chi)=
\frac{9\chi^2(\chi^2-4L^2)}
{8(\chi^2-2L^2)^2},
\qquad
2L\leq\chi<2\sqrt2\,L.
\label{eq:conclusion_rn_eikonal_inverse}
\end{equation}
For Einstein--Maxwell RN, \(u=q^2\), so the neutral scalar determines
\(\lvert q\rvert\) but not the sign of the charge.

The finite-multipole calculation shows that the domain in
Eq.~\eqref{eq:conclusion_rn_eikonal_inverse} must be enforced. For the
Schwarzschild \(l=2\) fundamental scalar mode,
\(\chi_{20}^{\rm spec}=4.998449674<2L=5\), and the formal eikonal inverse
returns \(u_{\rm eik}=-2.79535\times10^{-3}\). This value lies outside the
physical RN image and is not a reconstructed negative charge squared. On
the displayed \(l=2\) RN grid, the maximum real-frequency, damping-rate,
and ratio errors are \(5.636\times10^{-3}\),
\(5.516\times10^{-3}\), and \(2.072\times10^{-3}\), respectively. The
smaller ratio error partly results from correlated cancellation between the
two frequency components and does not guarantee a well-conditioned inverse.

The analytic eikonal inverse produces a maximum displayed parameter bias
\(\max|\Delta u_{\rm eik}|=4.652\times10^{-3}\). The corresponding mass,
temperature, and entropy biases remain deterministic approximation errors,
not statistical variances. For low multipoles, the appropriate estimator is
the calibrated monotone spectral inverse constructed in
Sec.~\ref{sec:numerical_validation}, which is validated on
\(0\leq u\leq0.9801\) and is not extrapolated to extremality.

The RN-form application also gives an exact no-go result for neutral-scalar
theory discrimination. Under
\begin{equation}
M=(1+\alpha)m_{\rm MOG},
\qquad
u=\frac{\alpha}{1+\alpha},
\qquad
m_{\rm MOG}=(1-u)M,
\label{eq:conclusion_mog_rn_map}
\end{equation}
the constant-coupling static MOG lapse is identical to the RN-form lapse.
The two descriptions therefore have the same normalized time, areal
coordinate, tortoise coordinate, scalar potential, boundary conditions, and
complete neutral minimally coupled scalar spectrum:
\begin{equation}
\widehat\omega_{ln}^{\rm MOG}(\alpha)
=
\widehat\omega_{ln}^{\rm RN}
\left(u=\frac{\alpha}{1+\alpha}\right)
\qquad
\text{for all }(l,n).
\label{eq:conclusion_exact_scalar_isospectrality}
\end{equation}
This identity is independent of the eikonal approximation. Additional
neutral scalar multipoles, overtones, or more accurate scalar solvers cannot
break the degeneracy. Theory discrimination requires information absent
from the common scalar operator, such as an independent mass or charge
constraint, a theory-dependent matter coupling, or correctly derived
tensor, vector, and scalar gravitational perturbations.

The mapped backgrounds also have the same horizon geometry and Hawking
temperature but different action-dependent entropies:
\begin{equation}
T_H^{\rm RN}=T_H^{\rm MOG}
=
\frac{\sqrt{1-u}}
{2\pi M(1+\sqrt{1-u})^2},
\qquad
S_{\rm MOG}=(1-u)S_{\rm RN}.
\label{eq:conclusion_thermodynamic_identity}
\end{equation}
The scalar spectrum can therefore determine the common model-conditional
temperature, but it cannot select the entropy assignment. Wald entropy
requires the gravitational action and its effective curvature coupling.

A complete inference must also propagate the covariance of
\((\omega_R,\tau)\) through the spectral inverse and retain the induced
mass--parameter covariance. Statistical uncertainty and deterministic
finite-\(l\) bias should be reported separately unless a probabilistic model
for the approximation error is introduced.

Finally, the photon-sphere instability exponent in the eikonal damping rate
is a classical geodesic quantity, not the quantum Lyapunov exponent defined
through an out-of-time-order correlator. In the RN near-extremal limit,
\begin{equation}
\lim_{u\to1^-}M\kappa_H=0,
\qquad
\lim_{u\to1^-}M\lambda_{\rm ph}=\frac{1}{4\sqrt2},
\qquad
\lim_{u\to1^-}\frac{\lambda_{\rm ph}}{\kappa_H}=+\infty.
\label{eq:conclusion_extremal_instability_limit}
\end{equation}
Thus no universal photon-sphere replacement of the MSS chaos bound follows from the present construction.

The resulting framework is a controlled within-family reconstruction, not
a universal black-hole classifier. At present, the calibrated scalar
inverse is a theory-level or synthetic reconstruction. Application to
gravitational-wave observations requires the observable, theory-specific
perturbation sectors and a detector-level multimode likelihood. Natural
extensions include rotation, multiparameter families, mode mixing,
dynamical STVG couplings, charged or nonminimally coupled probes,
environmental corrections, and joint inference from several perturbation
channels. Each extension must establish injectivity, conditioning,
finite-mode bias, and possible cross-theory isospectrality before assigning
thermodynamic quantities.

\appendix

\section{Eikonal WKB derivation at fixed overtone}
\label{app:wkb_derivation}

This appendix derives the large-\(L\) scalar quasinormal frequency used in Sec.~\ref{sec:scalar_framework} and identifies the order of the finite-\(L\) corrections. With \(L=l+1/2\), \(N=n+1/2\), \(\widehat{\omega}=M\omega\), and \(d/dx_*=\widetilde f\,d/dx\), write
\begin{equation}
\frac{d^2\psi}{dx_*^2}+\left[\widehat{\omega}^{\,2}-V_L(x)\right]\psi=0,\qquad V_L(x)=L^2U(x)+W(x),\qquad U(x)=\frac{\widetilde f(x)}{x^2}.
\label{eq:appA_master_split}
\end{equation}
The eikonal limit is taken at fixed \(n\), so \(N/L\ll1\).

Let \(x_{\rm c}\) be the nondegenerate maximum of \(U\), satisfying \(U_{\rm c}'=0\) and \(U_{\rm c}''<0\), and let \(x_{\rm p}\) denote the maximum of the complete potential \(V_L\). Set \(x_{\rm p}=x_{\rm c}+a_{\rm c}L^{-2}+O(L^{-4})\). Expanding \(V_L'(x_{\rm p})=0\) gives
\begin{equation}
0=a_{\rm c}U_{\rm c}''+W_{\rm c}'+O(L^{-2}),\qquad a_{\rm c}=-\frac{W_{\rm c}'}{U_{\rm c}''},
\end{equation}
and therefore
\begin{equation}
x_{\rm p}=x_{\rm c}-\frac{W_{\rm c}'}{L^2U_{\rm c}''}+O(L^{-4}).
\label{eq:appA_peak_displacement}
\end{equation}
Thus the finite-\(L\) barrier maximum differs from the photon-sphere position only at order \(L^{-2}\).

Expanding the potential at \(x_{\rm p}\) yields
\begin{equation}
V_{\rm p}\equiv V_L(x_{\rm p})=L^2U_{\rm c}+W_{\rm c}-\frac{(W_{\rm c}')^2}{2L^2U_{\rm c}''}+O(L^{-4}).
\label{eq:appA_peak_value}
\end{equation}
The term \(W_{\rm c}\) therefore changes the barrier height at order \(L^0\), whereas the displacement of the maximum first contributes at order \(L^{-2}\).

For any radial function \(V(x)\),
\begin{equation}
\frac{d^2V}{dx_*^2}=\widetilde f^{\,2}V''+\widetilde f\widetilde f'V'.
\end{equation}
Since \(V_L'(x_{\rm p})=0\), the curvature at the maximum is
\begin{equation}
V_{\rm p}^{(2)}\equiv\left.\frac{d^2V_L}{dx_*^2}\right|_{x_{\rm p}}=L^2\widetilde f_{\rm c}^{\,2}U_{\rm c}''+B_{\rm c}+O(L^{-2}),
\label{eq:appA_peak_tortoise_curvature}
\end{equation}
where
\[
B_{\rm c}=\widetilde f_{\rm c}^{\,2}\!\left(W_{\rm c}''+a_{\rm c}U_{\rm c}'''\right)+2\widetilde f_{\rm c}\widetilde f_{\rm c}'a_{\rm c}U_{\rm c}''
\]
is independent of \(L\). Defining \(\mathcal A_{\rm c}=\sqrt{-2\widetilde f_{\rm c}^{\,2}U_{\rm c}''}\), one obtains
\begin{equation}
\sqrt{-2V_{\rm p}^{(2)}}=L\mathcal A_{\rm c}-\frac{B_{\rm c}}{L\mathcal A_{\rm c}}+O(L^{-3}).
\label{eq:appA_curvature_expansion}
\end{equation}

The first-order barrier-top WKB condition is \cite{Schutz:1985zz,Seidel:1989bp, Leaver:1985ax, Iyer:1994ys, Iyer:1986np}
\begin{equation}
i\frac{\widehat{\omega}^{\,2}-V_{\rm p}}{\sqrt{-2V_{\rm p}^{(2)}}}=N,\qquad \widehat{\omega}^{\,2}=V_{\rm p}-iN\sqrt{-2V_{\rm p}^{(2)}}.
\label{eq:appA_first_order_wkb_condition}
\end{equation}
Using Eqs.~\eqref{eq:appA_peak_value} and \eqref{eq:appA_curvature_expansion} gives
\begin{equation}
\widehat{\omega}^{\,2}=L^2U_{\rm c}-iNL\mathcal A_{\rm c}+W_{\rm c}+O(L^{-1}).
\label{eq:appA_frequency_squared_ordering}
\end{equation}
To determine the frequency through order \(L^0\), set \(\widehat{\omega}_{Ln}=L\omega_{-1}+\omega_0+O(L^{-1})\). Matching the \(L^2\) and \(L\) coefficients gives
\begin{equation}
\omega_{-1}=\sqrt{U_{\rm c}},\qquad \omega_0=-iN\sqrt{-\frac{\widetilde f_{\rm c}^{\,2}U_{\rm c}''}{2U_{\rm c}}}.
\label{eq:appA_frequency_coefficients}
\end{equation}
The positive root selects the positive-frequency branch, while the negative imaginary sign gives damping for the convention \(e^{-i\omega t}\).

The coefficients can be written directly in terms of the metric. Since \(U=\widetilde f/x^2\), the condition \(U_{\rm c}'=0\) implies
\begin{equation}
\widetilde f_{\rm c}'=\frac{2\widetilde f_{\rm c}}{x_{\rm c}},\qquad U_{\rm c}''=\frac{\widetilde f_{\rm c}''}{x_{\rm c}^2}-\frac{2\widetilde f_{\rm c}}{x_{\rm c}^4}.
\label{eq:appA_photon_sphere_identities}
\end{equation}
Hence
\begin{equation}
\Omega_{\rm c}\equiv\sqrt{U_{\rm c}}=\frac{\sqrt{\widetilde f_{\rm c}}}{x_{\rm c}},\qquad \Lambda_{\rm c}^2\equiv-\frac{\widetilde f_{\rm c}^{\,2}U_{\rm c}''}{2U_{\rm c}}=\frac{\widetilde f_{\rm c}^{\,2}-\frac12x_{\rm c}^2\widetilde f_{\rm c}\widetilde f_{\rm c}''}{x_{\rm c}^2}.
\label{eq:appA_photon_sphere_functions}
\end{equation}
The fixed-overtone eikonal spectrum is therefore
\begin{equation}
M\omega_{Ln}=L\Omega_{\rm c}-iN\Lambda_{\rm c}+O(L^{-1}),\qquad L\gg1,\qquad N/L\ll1.
\label{eq:appA_final_eikonal_frequency}
\end{equation}

The coefficient of order \(L^{-1}\) in \(\widehat{\omega}\) is not fixed by the leading WKB condition alone. The term \(W_{\rm c}\) and the higher-order WKB corrections both contribute at order \(L^0\) to \(\widehat{\omega}^{\,2}\), and hence at order \(L^{-1}\) to \(\widehat{\omega}\) \cite{Iyer:1994ys,Dolan:2009nk}. In contrast, \(x_{\rm p}-x_{\rm c}=O(L^{-2})\) changes the height only at order \(L^{-2}\) and the first-order WKB curvature at order \(L^{-1}\) in \(\widehat{\omega}^{\,2}\). It therefore does not alter \(L\Omega_{\rm c}\), \(-iN\Lambda_{\rm c}\), or the complete order-\(L^{-1}\) frequency coefficient. Low-multipole reconstruction must instead use the finite-\(l\) spectrum discussed in Sec.~\ref{sec:numerical_validation}.

\section{RN domain and exact MOG--RN scalar identity}
\label{app:rn_identity}

This appendix establishes the RN parameter domain and proves the exact neutral-scalar identity used in Secs.~\ref{sec:applications} and \ref{sec:numerical_validation}. Introduce \(\rho=r/M\) and
\[
F(\rho;u)=1-\frac{2}{\rho}+\frac{u}{\rho^2},\qquad 0\leq u<1,
\]
where \(u=q^2\) for RN. Define \(\delta=\sqrt{1-u}\) and \(\zeta=\sqrt{9-8u}\). The horizon and circular-null-orbit equations are
\begin{equation}
\rho^2-2\rho+u=0,\qquad \rho^2-3\rho+2u=0,
\label{eq:app_rn_characteristic_equations}
\end{equation}
with roots
\begin{equation}
\rho_\pm=1\pm\delta,\qquad \rho_{\rm ph}^{\pm}=\frac{3\pm\zeta}{2}.
\label{eq:app_rn_geometric_roots}
\end{equation}

The exterior photon sphere is \(\rho_c=\rho_{\rm ph}^{+}\). Indeed,
\[
\rho_c-\rho_+=\frac{1+\zeta-2\delta}{2}>0,\qquad \rho_{\rm ph}^{-}\leq1<\rho_+,
\]
throughout \(0\leq u<1\). The first inequality follows from \(\zeta^2=1+8\delta^2\): it is immediate for \(\delta\leq1/2\), while for \(\delta>1/2\),
\[
\zeta^2-(2\delta-1)^2=4\delta(\delta+1)>0.
\]
Hence the plus root is the unique exterior circular null orbit.

For \(U(\rho)=F(\rho)/\rho^2\), direct substitution at \(\rho_c\) gives
\begin{equation}
F_c=\frac{1+\zeta}{2(3+\zeta)},\qquad \Omega_c^2=\frac{2(1+\zeta)}{(3+\zeta)^3},\qquad \Lambda_c^2=\frac{4\zeta(1+\zeta)}{(3+\zeta)^4},\qquad U_c''=-\frac{64\zeta}{(3+\zeta)^5}.
\label{eq:app_rn_peak_properties}
\end{equation}
Thus \(F_c>0\), \(\Omega_c^2>0\), \(\Lambda_c^2>0\), and \(U_c''<0\). The exterior orbit is therefore an unstable maximum of the eikonal potential, as required by the null-orbit interpretation of the large-\(l\) spectrum \cite{Cardoso:2008bp}.

For the fundamental mode, the leading eikonal quality coordinate is
\begin{equation}
\chi(u)=L\sqrt{\frac{2(3+\zeta)}{\zeta}},\qquad \frac{d\chi}{du}=\frac{12L^2}{\zeta^3\chi}>0.
\label{eq:app_rn_monotonicity}
\end{equation}
The map is therefore globally injective on the nonextremal domain. Its image and inverse are
\begin{equation}
2L\leq\chi<2\sqrt{2}\,L,\qquad u(\chi)=\frac{9\chi^2(\chi^2-4L^2)}{8(\chi^2-2L^2)^2}.
\label{eq:app_rn_inverse_domain}
\end{equation}
The lower endpoint is Schwarzschild, while the excluded upper endpoint is extremal RN. Conversely, every \(\chi\) in this half-open interval produces one \(u\in[0,1)\).

Because \(u=q^2\), the neutral spectrum is insensitive to the charge sign:
\begin{equation}
\chi(q)=\chi(-q),\qquad \frac{d\chi}{dq}=2q\frac{d\chi}{du},\qquad \left.\frac{d\chi}{dq}\right|_{q=0}=0.
\label{eq:app_rn_sign_degeneracy}
\end{equation}
The inverse therefore determines \(\lvert q\rvert\), not \(q\). This is a discrete sign degeneracy and does not affect injectivity in the metric parameter \(u\).

The same RN-form geometry describes the constant-coupling static MOG solution \cite{Moffat:2014aja}. With
\[
M=(1+\alpha)m_{\rm MOG},\qquad u=\frac{\alpha}{1+\alpha},
\]
the lapse satisfies the exact identity
\begin{equation}
f_{\rm MOG}(r)=1-\frac{2(1+\alpha)m_{\rm MOG}}{r}+\frac{\alpha(1+\alpha)m_{\rm MOG}^2}{r^2}=1-\frac{2M}{r}+\frac{uM^2}{r^2}=f_{\rm RN}(r).
\label{eq:app_mog_rn_metric_identity}
\end{equation}
Both metrics approach \(g_{tt}\to-1\) at infinity, so their Killing times have the same normalization. Their horizons, photon spheres, tortoise coordinates, and causal exterior regions therefore coincide without an additional coordinate rescaling.

Let \(\rho_*=r_*/M\) and \(\widehat\omega=M\omega\). The common neutral-scalar boundary-value problem is
\begin{equation}
\frac{d\rho_*}{d\rho}=\frac{1}{F},\qquad \mathcal V_l(\rho)=F\left[\frac{l(l+1)}{\rho^2}+\frac{F'}{\rho}\right],\qquad \psi\sim e^{-i\widehat\omega\rho_*}\ \text{at the horizon},\qquad \psi\sim e^{+i\widehat\omega\rho_*}\ \text{at infinity}.
\label{eq:app_mog_rn_scalar_bvp_identity}
\end{equation}
An additive constant in \(\rho_*\) changes only the constant phases of the asymptotic amplitudes. The normalized time, radial operator, potential, and endpoint conditions are otherwise identical. Consequently,
\begin{equation}
\widehat\omega_{ln}^{\rm MOG}(\alpha)=\widehat\omega_{ln}^{\rm RN}\left(u=\frac{\alpha}{1+\alpha}\right)
\label{eq:app_mog_rn_exact_spectrum}
\end{equation}
for every \(l\) and \(n\), with the same multiplicities and overtone labels. This proves Proposition~\ref{prop:exact_rn_mog_scalar_isospectrality} beyond the eikonal approximation.

The identity applies only to the neutral, minimally coupled test-scalar operator. It does not imply equality of charged-field, nonminimally coupled, electromagnetic, vector, or gravitational perturbation equations. The common normalized metric also gives the same Hawking temperature, whereas the entropy remains action dependent, as shown in Appendix~\ref{app:mog_thermodynamics}.

\section{Constant-coupling MOG entropy and thermodynamic consistency}
\label{app:mog_thermodynamics}

This appendix supplies the action-level input required by Secs.~\ref{sec:thermodynamic_reconstruction} and \ref{sec:applications}. The result applies only to the constant-coupling, massless-vector sector of STVG. It does not directly extend to dynamical scalar couplings, a massive vector field, or thermodynamic variations of \(\alpha\).

With \(G_\alpha=G_N(1+\alpha)\) and \(B_{ab}=2\nabla_{[a}\phi_{b]}\), the truncated action is
\begin{equation}
I_{\rm tr}=\int d^4x\sqrt{-g}\left[\frac{R}{16\pi G_\alpha}-\frac{1}{16\pi}B_{ab}B^{ab}\right].
\label{eq:appB_truncated_mog_action}
\end{equation}
This is the constant-coupling Einstein--vector sector of STVG \cite{Moffat:2005si,Moffat:2014aja}. Its vector equation is \(\nabla_aB^{ab}=0\). For the gauge potential \(\phi_a dx^a=-Q_g\,dt/r\), the static solution may be written as
\begin{equation}
ds^2=-f(r)dt^2+\frac{dr^2}{f(r)}+r^2d\Omega_2^2,\qquad f(r)=1-\frac{2G_\alpha\mathcal M}{r}+\frac{G_\alpha Q_g^2}{r^2}.
\label{eq:appB_enlarged_einstein_vector_lapse}
\end{equation}
Here \(\mathcal M\) and \(Q_g\) are independent parameters in the enlarged Einstein--vector family at fixed \(G_\alpha\).

The static MOG branch imposes
\begin{equation}
Q_g=\sqrt{\alpha G_N}\,\mathcal M,\qquad m=G_N\mathcal M,\qquad M=G_\alpha\mathcal M=(1+\alpha)m.
\label{eq:appB_mog_branch_definitions}
\end{equation}
Consequently,
\begin{equation}
f_{\rm MOG}(r)=1-\frac{2(1+\alpha)m}{r}+\frac{\alpha(1+\alpha)m^2}{r^2}.
\label{eq:appB_static_mog_lapse}
\end{equation}

The horizon equation \(r^2-2G_\alpha\mathcal Mr+G_\alpha Q_g^2=0\) gives
\begin{equation}
r_\pm=G_\alpha\mathcal M\pm\sqrt{G_\alpha^2\mathcal M^2-G_\alpha Q_g^2},\qquad r_++r_-=2G_\alpha\mathcal M,\qquad r_+r_-=G_\alpha Q_g^2.
\label{eq:appB_general_horizon_roots}
\end{equation}
Defining \(s=\sqrt{1+\alpha}\), the MOG branch reduces these roots to
\begin{equation}
r_\pm=G_N\mathcal M\,s(s\pm1).
\label{eq:appB_mog_horizon_roots}
\end{equation}
The surface gravity and Hawking temperature are therefore
\begin{equation}
\kappa_H=\frac{r_+-r_-}{2r_+^2}=\frac{1}{G_N\mathcal M\,s(s+1)^2},\qquad T_H^{\rm MOG}=\frac{1}{2\pi G_N\mathcal M\,s(s+1)^2}.
\label{eq:appB_mog_hawking_temperature}
\end{equation}
At \(\alpha=0\), this gives \(r_+=2G_N\mathcal M\) and \(T_H=(8\pi G_N\mathcal M)^{-1}\), as required \cite{Hawking:1975vc}.

For a stationary bifurcate Killing horizon, the Wald entropy is \cite{Wald:1993nt,Iyer:1994ys}
\begin{equation}
S_{\rm Wald}=-2\pi\int_{\mathcal H}d^2x\sqrt{h}\,E^{abcd}\epsilon_{ab}\epsilon_{cd},\qquad E^{abcd}\equiv\frac{\partial\mathcal L}{\partial R_{abcd}},
\label{eq:appB_wald_entropy_definition}
\end{equation}
with \(\epsilon_{ab}\epsilon^{ab}=-2\). For Eq.~\eqref{eq:appB_truncated_mog_action},
\begin{equation}
E^{abcd}=\frac{g^{ac}g^{bd}-g^{ad}g^{bc}}{32\pi G_\alpha},\qquad E^{abcd}\epsilon_{ab}\epsilon_{cd}=-\frac{1}{8\pi G_\alpha}.
\label{eq:appB_wald_contraction}
\end{equation}
The vector term contains no explicit Riemann tensor and contributes no separate Wald charge. Hence
\begin{equation}
S_{\rm MOG}=\frac{A_H}{4G_\alpha}=\frac{\pi r_+^2}{G_\alpha}=\pi G_N\mathcal M^2(s+1)^2.
\label{eq:appB_mog_area_entropy}
\end{equation}
The Schwarzschild limit is \(S_{\rm MOG}=4\pi G_N\mathcal M^2\).

The fixed-coupling first law is most clearly established in the enlarged family. With
\[
T_H=\frac{r_+-r_-}{4\pi r_+^2},\qquad S_{\rm Wald}=\frac{\pi r_+^2}{G_\alpha},\qquad \Phi_H=\frac{Q_g}{r_+},
\]
direct differentiation gives
\[
T_H\left(\frac{\partial S_{\rm Wald}}{\partial\mathcal M}\right)_{Q_g}=1,\qquad T_H\left(\frac{\partial S_{\rm Wald}}{\partial Q_g}\right)_{\mathcal M}+\Phi_H=0.
\]
Therefore,
\begin{equation}
d\mathcal M=T_H\,dS_{\rm Wald}+\Phi_H\,dQ_g,\qquad \mathcal M=2T_HS_{\rm Wald}+\Phi_HQ_g.
\label{eq:appB_first_law_smarr}
\end{equation}
The Smarr relation follows directly from \(r_+r_-=G_\alpha Q_g^2\) and \(r_++r_-=2G_\alpha\mathcal M\).

On the MOG branch, \(dQ_g=\sqrt{\alpha G_N}\,d\mathcal M\) at fixed \(\alpha\), while
\begin{equation}
\Phi_H=\frac{\sqrt{\alpha/G_N}}{s(s+1)},\qquad T_H\frac{dS_{\rm MOG}}{d\mathcal M}=\frac{1}{s},\qquad \Phi_H\frac{dQ_g}{d\mathcal M}=1-\frac{1}{s}.
\label{eq:appB_restricted_first_law_factors}
\end{equation}
Thus
\begin{equation}
d\mathcal M=T_H\,dS_{\rm MOG}+\Phi_H\,dQ_g=\left[\frac{1}{s}+1-\frac{1}{s}\right]d\mathcal M.
\label{eq:appB_restricted_first_law}
\end{equation}
The vector work term is essential; \(d\mathcal M=T_HdS_{\rm MOG}\) alone is not the correct restricted first law. The corresponding Smarr contributions are \(2T_HS_{\rm MOG}=\mathcal M/s\) and \(\Phi_HQ_g=\mathcal M(1-1/s)\).

To connect with the RN-form variables used in the main text, set
\begin{equation}
M=G_\alpha\mathcal M,\qquad u=\frac{\alpha}{1+\alpha},\qquad \delta=\sqrt{1-u}=\frac{1}{s}.
\label{eq:appB_main_text_map}
\end{equation}
Then \(r_+=M(1+\delta)\), and the MOG temperature becomes
\begin{equation}
T_H^{\rm MOG}=\frac{\delta}{2\pi M(1+\delta)^2}=T_H^{\rm RN}.
\label{eq:appB_mapped_temperature}
\end{equation}
The entropies associated with the two actions are
\begin{equation}
S_{\rm MOG}=\frac{\pi M^2(1-u)(1+\delta)^2}{G_N},\qquad S_{\rm RN}=\frac{\pi M^2(1+\delta)^2}{G_N},
\label{eq:appB_mapped_entropies}
\end{equation}
and therefore
\begin{equation}
\frac{S_{\rm MOG}}{S_{\rm RN}}=\frac{G_N}{G_\alpha}=\frac{1}{1+\alpha}=1-u.
\label{eq:appB_entropy_ratio}
\end{equation}
The exact RN--MOG metric identity thus fixes the same horizon geometry and temperature but not the entropy, which depends on the curvature coefficient in the action.

All variations above keep \(G_\alpha\), and hence \(\alpha\), fixed. If \(\alpha\) is promoted to a thermodynamic variable, the first law must be extended by a conjugate term \(\Psi_\alpha\,d\alpha\). No varying-coupling first law is assumed in this work.

\section{Numerical implementation and reproducibility}
\label{app:numerical_implementation}
This appendix specifies the two independent finite-$l$ calculations used in Sec.~\ref{sec:numerical_validation}: a Chebyshev solution of the exact scalar boundary-value problem and a sixth-order WKB--Pad\'e calculation. Neither method uses the leading eikonal frequency as input.

For the RN-form geometry, define
\begin{equation}
F(x)=1-\frac{2}{x}+\frac{u}{x^2}=\frac{(x-x_+)(x-x_-)}{x^2},\qquad x_\pm=1\pm\sqrt{1-u},\qquad \Delta_x=x_+-x_-.
\label{eq:appD_rn_geometry}
\end{equation}
The dimensionless scalar equation and potential are
\begin{equation}
\frac{d^2\psi}{dx_*^2}+\left[\widehat\omega^2-V_l(x)\right]\psi=0,\qquad V_l(x)=F(x)\left[\frac{l(l+1)}{x^2}+\frac{F'(x)}{x}\right],\qquad \frac{dx_*}{dx}=\frac{1}{F(x)}.
\label{eq:appD_scalar_problem}
\end{equation}
Direct integration gives
\begin{equation}
x_*=x+\frac{x_+^2}{\Delta_x}\ln(x-x_+)-\frac{x_-^2}{\Delta_x}\ln(x-x_-)+C_*.
\label{eq:appD_tortoise_coordinate}
\end{equation}
Differentiation of Eq.~\eqref{eq:appD_tortoise_coordinate} reproduces $dx_*/dx=1/F$ exactly. For the convention $e^{-i\omega t}$, quasinormal modes satisfy $\psi\sim e^{-i\widehat\omega x_*}$ at the outer horizon and $\psi\sim e^{+i\widehat\omega x_*}$ at spatial infinity.

Compactify the exterior through $y=1-x_+/x$, so $x=x_+/(1-y)$, $y=0$ is the horizon, and $y=1$ is infinity. Remove the endpoint phases by writing
\begin{equation}
\psi(x)=e^{i\widehat\omega\mathcal P(x)}\Phi(y),\qquad \mathcal P(x)=x-\frac{x_+^2}{\Delta_x}\ln(x-x_+)+\frac{2x_+^2-x_-^2}{\Delta_x}\ln(x-x_-).
\label{eq:appD_endpoint_factorization}
\end{equation}
Near $x=x_+$, this factor behaves as $(x-x_+)^{-i\widehat\omega x_+^2/\Delta_x}$. At infinity, its total logarithmic coefficient is
$$
-\frac{x_+^2}{\Delta_x}+\frac{2x_+^2-x_-^2}{\Delta_x}=x_++x_-=2,
$$
so $e^{i\widehat\omega\mathcal P}\sim e^{i\widehat\omega x}x^{2i\widehat\omega}\sim e^{i\widehat\omega x_*}$. The remainder $\Phi(y)$ is therefore regular at both endpoints.

The transformed tortoise operator is $d/dx_*=\mathcal A(y)d/dy$, with
\begin{equation}
\mathcal A(y)=\frac{y(1-y)^2(\Delta_x+x_-y)}{x_+^2},\qquad h(y)\equiv\frac{d\mathcal P}{dy}=\frac{x_+^2(-1+4y-2y^2)}{y(1-y)^2(\Delta_x+x_-y)}.
\label{eq:appD_compact_functions}
\end{equation}
Substitution into Eq.~\eqref{eq:appD_scalar_problem} gives the exact quadratic differential pencil
\begin{equation}
\left(\mathcal K_0+\widehat\omega\mathcal K_1+\widehat\omega^2\mathcal K_2\right)\Phi=0,
\label{eq:appD_differential_pencil}
\end{equation}
where
$$
\mathcal K_0=\mathcal A^2\partial_y^2+\mathcal A\mathcal A'\partial_y-V_l,\qquad
\mathcal K_1=i\left(2\mathcal A^2h\partial_y+\mathcal A^2h'+\mathcal A\mathcal A'h\right),\qquad
\mathcal K_2=1-\mathcal A^2h^2.
$$
Direct symbolic expansion of the factored radial equation reproduces these operators with zero residual. Although $h$ contains endpoint poles, all combinations entering the pencil are regular; interior Chebyshev--Gauss points also avoid evaluating isolated singular factors at $y=0$ and $y=1$.

For collocation order $N_c$, use
\begin{equation}
\vartheta_j=\frac{(2j+1)\pi}{2N_c},\qquad y_j=\frac{1+\cos\vartheta_j}{2},\qquad \varpi_j=(-1)^j\sin\vartheta_j,\qquad j=0,\ldots,N_c-1.
\label{eq:appD_chebyshev_nodes}
\end{equation}
The barycentric differentiation matrix is $D_{ij}=\varpi_j/[\varpi_i(y_i-y_j)]$ for $i\neq j$, $D_{ii}=-\sum_{j\neq i}D_{ij}$, and $D^{(2)}=D^2$. Let $\mathbf A$, $\mathbf A'$, $\mathbf h$, $\mathbf h'$, and $\mathbf V$ denote the diagonal matrices obtained by evaluating the corresponding functions at the collocation points. The discrete operators are
\begin{equation}
K_0=\mathbf A^2D^{(2)}+\mathbf A\mathbf A'D-\mathbf V,\qquad K_1=i\left(2\mathbf A^2\mathbf hD+\mathbf A^2\mathbf h'+\mathbf A\mathbf A'\mathbf h\right),\qquad K_2=I-\mathbf A^2\mathbf h^2.
\label{eq:appD_discrete_operators}
\end{equation}
The quasinormal frequencies solve
\begin{equation}
\left(K_0+\widehat\omega K_1+\widehat\omega^2K_2\right)\boldsymbol a=0.
\label{eq:appD_quadratic_pencil}
\end{equation}
This quadratic problem is linearized as
\begin{equation}
\begin{pmatrix}0&I\\-K_0&-K_1\end{pmatrix}\begin{pmatrix}\boldsymbol a\\\widehat\omega\boldsymbol a\end{pmatrix}
=\widehat\omega\begin{pmatrix}I&0\\0&K_2\end{pmatrix}\begin{pmatrix}\boldsymbol a\\\widehat\omega\boldsymbol a\end{pmatrix}.
\label{eq:appD_linearized_eigenproblem}
\end{equation}
Block multiplication reduces Eq.~\eqref{eq:appD_linearized_eigenproblem} identically to Eq.~\eqref{eq:appD_quadratic_pencil}. Spectral formulations of this form are standard for quasinormal boundary-value problems after the endpoint behavior has been removed \cite{Jansen:2017oag}.

A candidate mode is retained only when $\Re\widehat\omega>0$, $\Im\widehat\omega<0$, and it remains stable under the $N_c=32,36,40$ resolution sequence. Its normalized residual is
\begin{equation}
\eta=\frac{\left|\left(K_0+\widehat\omega K_1+\widehat\omega^2K_2\right)\boldsymbol a\right|_2}{\left(|K_0|_2+|\widehat\omega||K_1|_2+|\widehat\omega|^2|K_2|_2\right)|\boldsymbol a|_2}.
\label{eq:appD_normalized_residual}
\end{equation}
At $u=0$, the fundamental branch is initialized with the converged damped root nearest the independent WKB--Pad\'e result. For increasing $u$, continuation selects the admissible root nearest the previously accepted frequency. Applying the same procedure independently at all three resolutions prevents jumps to neighboring overtones.

The calibrated inverse uses $u_j=0.9801j/120$, $j=0,\ldots,120$. It is constructed only after verifying $\chi_{20}^{\rm spec}(u_{j+1})>\chi_{20}^{\rm spec}(u_j)$ at every adjacent pair. The inverse and the associated real- and imaginary-frequency maps are represented by monotonicity-preserving piecewise-cubic Hermite interpolants rather than unconstrained cubic splines \cite{Fritsch:1980ubs}. No extrapolation beyond the validated interval is used.

The independent semianalytic calculation evaluates the exact finite-$l$ potential at its exterior maximum $x_0$, where $V_l'(x_0)=0$ and $V_0^{(2)}<0$. Tortoise-coordinate derivatives are generated recursively by
\begin{equation}
\mathscr D_*=F(x)\frac{d}{dx},\qquad V_0^{(k)}=\left.\mathscr D_*^k V_l(x)\right|_{x=x_0}.
\label{eq:appD_tortoise_derivative_recursion}
\end{equation}
The sixth-order WKB condition is
\begin{equation}
i\frac{\widehat\omega^2-V_0}{\sqrt{-2V_0^{(2)}}}-\sum_{j=2}^{6}\Lambda_j=N,
\label{eq:appD_sixth_order_wkb}
\end{equation}
where each $\Lambda_j$ is a rational polynomial in $N=n+1/2$ and the normalized barrier derivatives \cite{Konoplya:2003ii}. The resulting series for $\widehat\omega^2$ is resummed with
\begin{equation}
P_3^3(\epsilon)=\frac{a_0+a_1\epsilon+a_2\epsilon^2+a_3\epsilon^3}{1+b_1\epsilon+b_2\epsilon^2+b_3\epsilon^3},
\label{eq:appD_pade_approximant}
\end{equation}
whose Taylor expansion matches the sixth-order series through the available order. The value at $\epsilon=1$ gives the WKB estimate. The neighboring $P_2^4$ and $P_4^2$ approximants are used as stability checks \cite{Matyjasek:2017psv}.

The implementation is accepted only when the tortoise derivative, endpoint phases, factored differential pencil, and matrix linearization are satisfied algebraically; the spectral roots converge with $N_c$; the Schwarzschild frequencies reproduce established scalar benchmarks; and the spectral and WKB--Pad\'e frequencies agree at the accuracy reported in Table~\ref{tab:spectral_wkb_comparison}. Resolution changes and spectral--WKB differences estimate numerical uncertainty. The difference between the exact finite-$l$ spectrum and the leading eikonal formula is a deterministic approximation bias and is reported separately.

\begin{acknowledgments}
N.J.L. Lobos and E.T. Rodulfo gratefully acknowledge De La Salle University and the DLSU Theoretical Physics Group for their institutional support. Furthermore, we extend our sincere gratitude to the Department of Science and Technology – Accelerated Science and Technology Human Resource Development Program (DOST-ASTHRDP) for their generous and continuous support of our research endeavors.

\end{acknowledgments}
\section*{Data Availability Statement}
Data sharing is not applicable to this article as no datasets were generated or analyzed during the current study. No observational, experimental, Event Horizon Telescope, or gravitational-wave data were used in this study. All numerical data were generated from theoretical calculations of neutral test-scalar quasinormal modes.
\bibliography{ref}

\begin{thebibliography}{24}%
\makeatletter
\providecommand \@ifxundefined [1]{%
 \@ifx{#1\undefined}
}%
\providecommand \@ifnum [1]{%
 \ifnum #1\expandafter \@firstoftwo
 \else \expandafter \@secondoftwo
 \fi
}%
\providecommand \@ifx [1]{%
 \ifx #1\expandafter \@firstoftwo
 \else \expandafter \@secondoftwo
 \fi
}%
\providecommand \natexlab [1]{#1}%
\providecommand \enquote  [1]{``#1''}%
\providecommand \bibnamefont  [1]{#1}%
\providecommand \bibfnamefont [1]{#1}%
\providecommand \citenamefont [1]{#1}%
\providecommand \href@noop [0]{\@secondoftwo}%
\providecommand \href [0]{\begingroup \@sanitize@url \@href}%
\providecommand \@href[1]{\@@startlink{#1}\@@href}%
\providecommand \@@href[1]{\endgroup#1\@@endlink}%
\providecommand \@sanitize@url [0]{\catcode `\\12\catcode `\$12\catcode `\&12\catcode `\#12\catcode `\^12\catcode `\_12\catcode `\%12\relax}%
\providecommand \@@startlink[1]{}%
\providecommand \@@endlink[0]{}%
\providecommand \url  [0]{\begingroup\@sanitize@url \@url }%
\providecommand \@url [1]{\endgroup\@href {#1}{\urlprefix }}%
\providecommand \urlprefix  [0]{URL }%
\providecommand \Eprint [0]{\href }%
\providecommand \doibase [0]{https://doi.org/}%
\providecommand \selectlanguage [0]{\@gobble}%
\providecommand \bibinfo  [0]{\@secondoftwo}%
\providecommand \bibfield  [0]{\@secondoftwo}%
\providecommand \translation [1]{[#1]}%
\providecommand \BibitemOpen [0]{}%
\providecommand \bibitemStop [0]{}%
\providecommand \bibitemNoStop [0]{.\EOS\space}%
\providecommand \EOS [0]{\spacefactor3000\relax}%
\providecommand \BibitemShut  [1]{\csname bibitem#1\endcsname}%
\let\auto@bib@innerbib\@empty
\bibitem [{\citenamefont {Berti}\ \emph {et~al.}(2009)\citenamefont {Berti}, \citenamefont {Cardoso},\ and\ \citenamefont {Starinets}}]{Berti:2009kk}%
  \BibitemOpen
  \bibfield  {author} {\bibinfo {author} {\bibfnamefont {E.}~\bibnamefont {Berti}}, \bibinfo {author} {\bibfnamefont {V.}~\bibnamefont {Cardoso}},\ and\ \bibinfo {author} {\bibfnamefont {A.~O.}\ \bibnamefont {Starinets}},\ }\bibfield  {title} {\bibinfo {title} {Quasinormal modes of black holes and black branes},\ }\href {https://doi.org/10.1088/0264-9381/26/16/163001} {\bibfield  {journal} {\bibinfo  {journal} {Class. Quant. Grav.}\ }\textbf {\bibinfo {volume} {26}},\ \bibinfo {pages} {163001} (\bibinfo {year} {2009})},\ \Eprint {https://arxiv.org/abs/0905.2975} {arXiv:0905.2975 [gr-qc]} \BibitemShut {NoStop}%
\bibitem [{\citenamefont {Giesler}\ \emph {et~al.}(2019)\citenamefont {Giesler}, \citenamefont {Isi}, \citenamefont {Scheel},\ and\ \citenamefont {Teukolsky}}]{Giesler:2019uxc}%
  \BibitemOpen
  \bibfield  {author} {\bibinfo {author} {\bibfnamefont {M.}~\bibnamefont {Giesler}}, \bibinfo {author} {\bibfnamefont {M.}~\bibnamefont {Isi}}, \bibinfo {author} {\bibfnamefont {M.~A.}\ \bibnamefont {Scheel}},\ and\ \bibinfo {author} {\bibfnamefont {S.~A.}\ \bibnamefont {Teukolsky}},\ }\bibfield  {title} {\bibinfo {title} {Black hole ringdown: The importance of overtones},\ }\href {https://doi.org/10.1103/PhysRevX.9.041060} {\bibfield  {journal} {\bibinfo  {journal} {Phys. Rev. X}\ }\textbf {\bibinfo {volume} {9}},\ \bibinfo {pages} {041060} (\bibinfo {year} {2019})},\ \Eprint {https://arxiv.org/abs/1903.08284} {arXiv:1903.08284 [gr-qc]} \BibitemShut {NoStop}%
\bibitem [{\citenamefont {Cardoso}\ \emph {et~al.}(2009)\citenamefont {Cardoso}, \citenamefont {Miranda}, \citenamefont {Berti}, \citenamefont {Witek},\ and\ \citenamefont {Zanchin}}]{Cardoso:2008bp}%
  \BibitemOpen
  \bibfield  {author} {\bibinfo {author} {\bibfnamefont {V.}~\bibnamefont {Cardoso}}, \bibinfo {author} {\bibfnamefont {A.~S.}\ \bibnamefont {Miranda}}, \bibinfo {author} {\bibfnamefont {E.}~\bibnamefont {Berti}}, \bibinfo {author} {\bibfnamefont {H.}~\bibnamefont {Witek}},\ and\ \bibinfo {author} {\bibfnamefont {V.~T.}\ \bibnamefont {Zanchin}},\ }\bibfield  {title} {\bibinfo {title} {Geodesic stability, lyapunov exponents and quasinormal modes},\ }\href {https://doi.org/10.1103/PhysRevD.79.064016} {\bibfield  {journal} {\bibinfo  {journal} {Phys. Rev. D}\ }\textbf {\bibinfo {volume} {79}},\ \bibinfo {pages} {064016} (\bibinfo {year} {2009})},\ \Eprint {https://arxiv.org/abs/0812.1806} {arXiv:0812.1806 [hep-th]} \BibitemShut {NoStop}%
\bibitem [{\citenamefont {Dolan}\ and\ \citenamefont {Ottewill}(2009)}]{Dolan:2009nk}%
  \BibitemOpen
  \bibfield  {author} {\bibinfo {author} {\bibfnamefont {S.~R.}\ \bibnamefont {Dolan}}\ and\ \bibinfo {author} {\bibfnamefont {A.~C.}\ \bibnamefont {Ottewill}},\ }\bibfield  {title} {\bibinfo {title} {On an expansion method for black hole quasinormal modes and regge poles},\ }\href {https://doi.org/10.1088/0264-9381/26/22/225003} {\bibfield  {journal} {\bibinfo  {journal} {Class. Quant. Grav.}\ }\textbf {\bibinfo {volume} {26}},\ \bibinfo {pages} {225003} (\bibinfo {year} {2009})},\ \Eprint {https://arxiv.org/abs/0908.0329} {arXiv:0908.0329 [gr-qc]} \BibitemShut {NoStop}%
\bibitem [{\citenamefont {Konoplya}\ and\ \citenamefont {Stuchlík}(2017)}]{Konoplya:2017wot}%
  \BibitemOpen
  \bibfield  {author} {\bibinfo {author} {\bibfnamefont {R.~A.}\ \bibnamefont {Konoplya}}\ and\ \bibinfo {author} {\bibfnamefont {Z.}~\bibnamefont {Stuchlík}},\ }\bibfield  {title} {\bibinfo {title} {Are eikonal quasinormal modes linked to the unstable circular null geodesics?},\ }\href {https://doi.org/10.1016/j.physletb.2017.06.015} {\bibfield  {journal} {\bibinfo  {journal} {Phys. Lett. B}\ }\textbf {\bibinfo {volume} {771}},\ \bibinfo {pages} {597} (\bibinfo {year} {2017})},\ \Eprint {https://arxiv.org/abs/1705.05928} {arXiv:1705.05928 [gr-qc]} \BibitemShut {NoStop}%
\bibitem [{\citenamefont {V{\"o}lkel}\ and\ \citenamefont {Kokkotas}(2019)}]{VolkelKokkotas2019}%
  \BibitemOpen
  \bibfield  {author} {\bibinfo {author} {\bibfnamefont {S.~H.}\ \bibnamefont {V{\"o}lkel}}\ and\ \bibinfo {author} {\bibfnamefont {K.~D.}\ \bibnamefont {Kokkotas}},\ }\bibfield  {title} {\bibinfo {title} {Scalar fields and parametrized spherically symmetric black holes: Can one hear the shape of space-time?},\ }\href {https://doi.org/10.1103/PhysRevD.100.044026} {\bibfield  {journal} {\bibinfo  {journal} {Physical Review D}\ }\textbf {\bibinfo {volume} {100}},\ \bibinfo {pages} {044026} (\bibinfo {year} {2019})},\ \Eprint {https://arxiv.org/abs/1908.00252} {arXiv:1908.00252 [gr-qc]} \BibitemShut {NoStop}%
\bibitem [{\citenamefont {Cardoso}\ \emph {et~al.}(2019)\citenamefont {Cardoso}, \citenamefont {Kimura}, \citenamefont {Maselli}, \citenamefont {Berti}, \citenamefont {Macedo},\ and\ \citenamefont {McManus}}]{CardosoKimuraMaselliBertiMacedoMcManus2019}%
  \BibitemOpen
  \bibfield  {author} {\bibinfo {author} {\bibfnamefont {V.}~\bibnamefont {Cardoso}}, \bibinfo {author} {\bibfnamefont {M.}~\bibnamefont {Kimura}}, \bibinfo {author} {\bibfnamefont {A.}~\bibnamefont {Maselli}}, \bibinfo {author} {\bibfnamefont {E.}~\bibnamefont {Berti}}, \bibinfo {author} {\bibfnamefont {C.~F.~B.}\ \bibnamefont {Macedo}},\ and\ \bibinfo {author} {\bibfnamefont {R.}~\bibnamefont {McManus}},\ }\bibfield  {title} {\bibinfo {title} {Parametrized black hole quasinormal ringdown: Decoupled equations for nonrotating black holes},\ }\href {https://doi.org/10.1103/PhysRevD.99.104077} {\bibfield  {journal} {\bibinfo  {journal} {Physical Review D}\ }\textbf {\bibinfo {volume} {99}},\ \bibinfo {pages} {104077} (\bibinfo {year} {2019})},\ \Eprint {https://arxiv.org/abs/1901.01265} {arXiv:1901.01265 [gr-qc]} \BibitemShut {NoStop}%
\bibitem [{\citenamefont {McManus}\ \emph {et~al.}(2019)\citenamefont {McManus}, \citenamefont {Berti}, \citenamefont {Macedo}, \citenamefont {Kimura}, \citenamefont {Maselli},\ and\ \citenamefont {Cardoso}}]{McManusBertiMacedoKimuraMaselliCardoso2019}%
  \BibitemOpen
  \bibfield  {author} {\bibinfo {author} {\bibfnamefont {R.}~\bibnamefont {McManus}}, \bibinfo {author} {\bibfnamefont {E.}~\bibnamefont {Berti}}, \bibinfo {author} {\bibfnamefont {C.~F.~B.}\ \bibnamefont {Macedo}}, \bibinfo {author} {\bibfnamefont {M.}~\bibnamefont {Kimura}}, \bibinfo {author} {\bibfnamefont {A.}~\bibnamefont {Maselli}},\ and\ \bibinfo {author} {\bibfnamefont {V.}~\bibnamefont {Cardoso}},\ }\bibfield  {title} {\bibinfo {title} {Parametrized black hole quasinormal ringdown. ii. coupled equations and quadratic corrections for nonrotating black holes},\ }\href {https://doi.org/10.1103/PhysRevD.100.044061} {\bibfield  {journal} {\bibinfo  {journal} {Physical Review D}\ }\textbf {\bibinfo {volume} {100}},\ \bibinfo {pages} {044061} (\bibinfo {year} {2019})},\ \Eprint {https://arxiv.org/abs/1906.05155} {arXiv:1906.05155 [gr-qc]} \BibitemShut {NoStop}%
\bibitem [{\citenamefont {V{\"o}lkel}\ \emph {et~al.}(2022)\citenamefont {V{\"o}lkel}, \citenamefont {Franchini},\ and\ \citenamefont {Barausse}}]{VolkelFranchiniBarausse2022}%
  \BibitemOpen
  \bibfield  {author} {\bibinfo {author} {\bibfnamefont {S.~H.}\ \bibnamefont {V{\"o}lkel}}, \bibinfo {author} {\bibfnamefont {N.}~\bibnamefont {Franchini}},\ and\ \bibinfo {author} {\bibfnamefont {E.}~\bibnamefont {Barausse}},\ }\bibfield  {title} {\bibinfo {title} {Theory-agnostic reconstruction of potential and couplings from quasinormal modes},\ }\href {https://doi.org/10.1103/PhysRevD.105.084046} {\bibfield  {journal} {\bibinfo  {journal} {Physical Review D}\ }\textbf {\bibinfo {volume} {105}},\ \bibinfo {pages} {084046} (\bibinfo {year} {2022})},\ \Eprint {https://arxiv.org/abs/2202.08655} {arXiv:2202.08655 [gr-qc]} \BibitemShut {NoStop}%
\bibitem [{\citenamefont {Franchini}\ and\ \citenamefont {V{\"o}lkel}(2023)}]{FranchiniVolkel2023}%
  \BibitemOpen
  \bibfield  {author} {\bibinfo {author} {\bibfnamefont {N.}~\bibnamefont {Franchini}}\ and\ \bibinfo {author} {\bibfnamefont {S.~H.}\ \bibnamefont {V{\"o}lkel}},\ }\bibfield  {title} {\bibinfo {title} {Parametrized quasinormal mode framework for non-schwarzschild metrics},\ }\href {https://doi.org/10.1103/PhysRevD.107.124063} {\bibfield  {journal} {\bibinfo  {journal} {Physical Review D}\ }\textbf {\bibinfo {volume} {107}},\ \bibinfo {pages} {124063} (\bibinfo {year} {2023})},\ \Eprint {https://arxiv.org/abs/2210.14020} {arXiv:2210.14020 [gr-qc]} \BibitemShut {NoStop}%
\bibitem [{\citenamefont {Moffat}(2006)}]{Moffat:2005si}%
  \BibitemOpen
  \bibfield  {author} {\bibinfo {author} {\bibfnamefont {J.~W.}\ \bibnamefont {Moffat}},\ }\bibfield  {title} {\bibinfo {title} {Scalar-tensor-vector gravity theory},\ }\href {https://doi.org/10.1088/1475-7516/2006/03/004} {\bibfield  {journal} {\bibinfo  {journal} {JCAP}\ }\textbf {\bibinfo {volume} {03}},\ \bibinfo {pages} {004}},\ \Eprint {https://arxiv.org/abs/gr-qc/0506021} {arXiv:gr-qc/0506021} \BibitemShut {NoStop}%
\bibitem [{\citenamefont {Moffat}(2015)}]{Moffat:2014aja}%
  \BibitemOpen
  \bibfield  {author} {\bibinfo {author} {\bibfnamefont {J.~W.}\ \bibnamefont {Moffat}},\ }\bibfield  {title} {\bibinfo {title} {Black holes in modified gravity (mog)},\ }\href {https://doi.org/10.1140/epjc/s10052-015-3405-x} {\bibfield  {journal} {\bibinfo  {journal} {Eur. Phys. J. C}\ }\textbf {\bibinfo {volume} {75}},\ \bibinfo {pages} {175} (\bibinfo {year} {2015})},\ \Eprint {https://arxiv.org/abs/1412.5424} {arXiv:1412.5424 [gr-qc]} \BibitemShut {NoStop}%
\bibitem [{\citenamefont {Wald}(1993)}]{Wald:1993nt}%
  \BibitemOpen
  \bibfield  {author} {\bibinfo {author} {\bibfnamefont {R.~M.}\ \bibnamefont {Wald}},\ }\bibfield  {title} {\bibinfo {title} {Black hole entropy is the noether charge},\ }\href {https://doi.org/10.1103/PhysRevD.48.R3427} {\bibfield  {journal} {\bibinfo  {journal} {Phys. Rev. D}\ }\textbf {\bibinfo {volume} {48}},\ \bibinfo {pages} {R3427} (\bibinfo {year} {1993})},\ \Eprint {https://arxiv.org/abs/gr-qc/9307038} {arXiv:gr-qc/9307038} \BibitemShut {NoStop}%
\bibitem [{\citenamefont {Iyer}\ and\ \citenamefont {Wald}(1994)}]{Iyer:1994ys}%
  \BibitemOpen
  \bibfield  {author} {\bibinfo {author} {\bibfnamefont {V.}~\bibnamefont {Iyer}}\ and\ \bibinfo {author} {\bibfnamefont {R.~M.}\ \bibnamefont {Wald}},\ }\bibfield  {title} {\bibinfo {title} {Some properties of noether charge and a proposal for dynamical black hole entropy},\ }\href {https://doi.org/10.1103/PhysRevD.50.846} {\bibfield  {journal} {\bibinfo  {journal} {Phys. Rev. D}\ }\textbf {\bibinfo {volume} {50}},\ \bibinfo {pages} {846} (\bibinfo {year} {1994})},\ \Eprint {https://arxiv.org/abs/gr-qc/9403028} {arXiv:gr-qc/9403028} \BibitemShut {NoStop}%
\bibitem [{\citenamefont {Schutz}\ and\ \citenamefont {Will}(1985)}]{Schutz:1985zz}%
  \BibitemOpen
  \bibfield  {author} {\bibinfo {author} {\bibfnamefont {B.~F.}\ \bibnamefont {Schutz}}\ and\ \bibinfo {author} {\bibfnamefont {C.~M.}\ \bibnamefont {Will}},\ }\bibfield  {title} {\bibinfo {title} {Black hole normal modes: A semianalytic approach},\ }\href {https://doi.org/10.1086/184453} {\bibfield  {journal} {\bibinfo  {journal} {Astrophys. J. Lett.}\ }\textbf {\bibinfo {volume} {291}},\ \bibinfo {pages} {L33} (\bibinfo {year} {1985})}\BibitemShut {NoStop}%
\bibitem [{\citenamefont {Hawking}(1975)}]{Hawking:1975vc}%
  \BibitemOpen
  \bibfield  {author} {\bibinfo {author} {\bibfnamefont {S.~W.}\ \bibnamefont {Hawking}},\ }\bibfield  {title} {\bibinfo {title} {Particle creation by black holes},\ }\href {https://doi.org/10.1007/BF02345020} {\bibfield  {journal} {\bibinfo  {journal} {Commun. Math. Phys.}\ }\textbf {\bibinfo {volume} {43}},\ \bibinfo {pages} {199} (\bibinfo {year} {1975})},\ \bibinfo {note} {erratum: Commun. Math. Phys. 46, 206 (1976)}\BibitemShut {NoStop}%
\bibitem [{\citenamefont {Maldacena}\ \emph {et~al.}(2016)\citenamefont {Maldacena}, \citenamefont {Shenker},\ and\ \citenamefont {Stanford}}]{Maldacena:2015waa}%
  \BibitemOpen
  \bibfield  {author} {\bibinfo {author} {\bibfnamefont {J.}~\bibnamefont {Maldacena}}, \bibinfo {author} {\bibfnamefont {S.~H.}\ \bibnamefont {Shenker}},\ and\ \bibinfo {author} {\bibfnamefont {D.}~\bibnamefont {Stanford}},\ }\bibfield  {title} {\bibinfo {title} {A bound on chaos},\ }\href {https://doi.org/10.1007/JHEP08(2016)106} {\bibfield  {journal} {\bibinfo  {journal} {JHEP}\ }\textbf {\bibinfo {volume} {08}},\ \bibinfo {pages} {106}},\ \Eprint {https://arxiv.org/abs/1503.01409} {arXiv:1503.01409 [hep-th]} \BibitemShut {NoStop}%
\bibitem [{\citenamefont {Jansen}(2017)}]{Jansen:2017oag}%
  \BibitemOpen
  \bibfield  {author} {\bibinfo {author} {\bibfnamefont {A.}~\bibnamefont {Jansen}},\ }\bibfield  {title} {\bibinfo {title} {Overdamped modes in schwarzschild-de sitter and a mathematica package for the numerical computation of quasinormal modes},\ }\href {https://doi.org/10.1140/epjp/i2017-11825-9} {\bibfield  {journal} {\bibinfo  {journal} {Eur. Phys. J. Plus}\ }\textbf {\bibinfo {volume} {132}},\ \bibinfo {pages} {546} (\bibinfo {year} {2017})},\ \Eprint {https://arxiv.org/abs/1709.09178} {arXiv:1709.09178 [gr-qc]} \BibitemShut {NoStop}%
\bibitem [{\citenamefont {Iyer}\ and\ \citenamefont {Will}(1987)}]{Iyer:1986np}%
  \BibitemOpen
  \bibfield  {author} {\bibinfo {author} {\bibfnamefont {S.}~\bibnamefont {Iyer}}\ and\ \bibinfo {author} {\bibfnamefont {C.~M.}\ \bibnamefont {Will}},\ }\bibfield  {title} {\bibinfo {title} {{Black Hole Normal Modes: A {WKB} Approach. 1. Foundations and Application of a Higher Order {WKB} Analysis of Potential Barrier Scattering}},\ }\href {https://doi.org/10.1103/PhysRevD.35.3621} {\bibfield  {journal} {\bibinfo  {journal} {Phys. Rev. D}\ }\textbf {\bibinfo {volume} {35}},\ \bibinfo {pages} {3621} (\bibinfo {year} {1987})}\BibitemShut {NoStop}%
\bibitem [{\citenamefont {Konoplya}(2003)}]{Konoplya:2003ii}%
  \BibitemOpen
  \bibfield  {author} {\bibinfo {author} {\bibfnamefont {R.~A.}\ \bibnamefont {Konoplya}},\ }\bibfield  {title} {\bibinfo {title} {Quasinormal behavior of the d-dimensional schwarzschild black hole and higher order wkb approach},\ }\href {https://doi.org/10.1103/PhysRevD.68.024018} {\bibfield  {journal} {\bibinfo  {journal} {Phys. Rev. D}\ }\textbf {\bibinfo {volume} {68}},\ \bibinfo {pages} {024018} (\bibinfo {year} {2003})},\ \Eprint {https://arxiv.org/abs/gr-qc/0303052} {arXiv:gr-qc/0303052} \BibitemShut {NoStop}%
\bibitem [{\citenamefont {Matyjasek}\ and\ \citenamefont {Opala}(2017)}]{Matyjasek:2017psv}%
  \BibitemOpen
  \bibfield  {author} {\bibinfo {author} {\bibfnamefont {J.}~\bibnamefont {Matyjasek}}\ and\ \bibinfo {author} {\bibfnamefont {M.}~\bibnamefont {Opala}},\ }\bibfield  {title} {\bibinfo {title} {Quasinormal modes of black holes. the improved semianalytic approach},\ }\href {https://doi.org/10.1103/PhysRevD.96.024011} {\bibfield  {journal} {\bibinfo  {journal} {Phys. Rev. D}\ }\textbf {\bibinfo {volume} {96}},\ \bibinfo {pages} {024011} (\bibinfo {year} {2017})},\ \Eprint {https://arxiv.org/abs/1704.00361} {arXiv:1704.00361 [gr-qc]} \BibitemShut {NoStop}%
\bibitem [{\citenamefont {Fritsch}\ and\ \citenamefont {Carlson}(1980)}]{Fritsch:1980ubs}%
  \BibitemOpen
  \bibfield  {author} {\bibinfo {author} {\bibfnamefont {F.~N.}\ \bibnamefont {Fritsch}}\ and\ \bibinfo {author} {\bibfnamefont {R.~E.}\ \bibnamefont {Carlson}},\ }\bibfield  {title} {\bibinfo {title} {{Monotone Piecewise Cubic Interpolation}},\ }\href {https://doi.org/10.1137/0717021} {\bibfield  {journal} {\bibinfo  {journal} {SIAM J. Numer. Anal.}\ }\textbf {\bibinfo {volume} {17}},\ \bibinfo {pages} {238} (\bibinfo {year} {1980})}\BibitemShut {NoStop}%
\bibitem [{\citenamefont {Seidel}\ and\ \citenamefont {Iyer}(1990)}]{Seidel:1989bp}%
  \BibitemOpen
  \bibfield  {author} {\bibinfo {author} {\bibfnamefont {E.}~\bibnamefont {Seidel}}\ and\ \bibinfo {author} {\bibfnamefont {S.}~\bibnamefont {Iyer}},\ }\bibfield  {title} {\bibinfo {title} {{BLACK HOLE NORMAL MODES: A WKB APPROACH. 4. KERR BLACK HOLES}},\ }\href {https://doi.org/10.1103/PhysRevD.41.374} {\bibfield  {journal} {\bibinfo  {journal} {Phys. Rev. D}\ }\textbf {\bibinfo {volume} {41}},\ \bibinfo {pages} {374} (\bibinfo {year} {1990})}\BibitemShut {NoStop}%
\bibitem [{\citenamefont {Leaver}(1985)}]{Leaver:1985ax}%
  \BibitemOpen
  \bibfield  {author} {\bibinfo {author} {\bibfnamefont {E.~W.}\ \bibnamefont {Leaver}},\ }\bibfield  {title} {\bibinfo {title} {{An Analytic representation for the quasi normal modes of Kerr black holes}},\ }\href {https://doi.org/10.1098/rspa.1985.0119} {\bibfield  {journal} {\bibinfo  {journal} {Proc. Roy. Soc. Lond. A}\ }\textbf {\bibinfo {volume} {402}},\ \bibinfo {pages} {285} (\bibinfo {year} {1985})}\BibitemShut {NoStop}%
\end{thebibliography}%

\end{document}